\documentclass[preprint,12pt]{elsarticle}

\usepackage{amsmath,amssymb}
\usepackage{color}

\newtheorem{thm}{Theorem}

\newtheorem{assumption}{Assumption}
\newtheorem{defi}{Definition}

\newtheorem{remm}{Remark}

\newenvironment{proof}{\noindent {\em Proof.}}{\hfill \hspace*{1pt} \hfill $\square$}

\newcommand\real{\ensuremath{{\mathbb R}}}
\newcommand\realn{\ensuremath{{\mathbb{R}^n}}}

\newcommand{\calA}{\mathcal{A}}
\newcommand{\calB}{\mathcal{B}}
\newcommand{\calC}{\mathcal{C}}

\newcommand{\calR}{\mathcal{R}}

\newcommand{\calU}{\mathcal{U}}

\newcommand{\calK}{\mathcal{K}}

\newcommand{\calX}{\mathcal{X}}

\journal{Journal of Differential Equations}

\begin{document}

\begin{frontmatter}

\title{Differential positivity characterizes one-dimensional normally hyperbolic attractors}

\author[cambridge]{F. Forni\corref{corauthor}}
\ead{f.forni@eng.cam.ac.uk}
\author[liege]{A. Mauroy\fnref{grantAM}}
\ead{a.mauroy@ulg.ac.be}
\author[cambridge]{R. Sepulchre}
\ead{r.sepulchre@eng.cam.ac.uk}

\address[cambridge]{Department of Engineering, University of Cambridge, UK}
\address[liege]{Department of Electrical Engineering and Computer Science, University of Li{\`e}ge, BE}

\fntext[grantAM]{The paper presents research results of the Belgian Network DYSCO (Dynamical Systems, Control, and Optimization), funded by the Interuniversity Attraction Poles Programme, initiated by the Belgian State, Science Policy Office. The scientific responsibility rests with its authors. A. Mauroy is currently supported by a BELSPO (Belgian Science Policy) return grant.}

\cortext[corauthor]{Corresponding author.}

\begin{abstract}
The paper shows that normally hyperbolic one-dimensional compact attractors of smooth dynamical
systems are characterized by differential positivity, that is, the pointwise infinitesimal contraction of a smooth cone field.
The result is analog to the characterization of zero-dimensional hyperbolic attractors
by differential stability, which is the pointwise infinitesimal contraction of a Riemannian metric.
\end{abstract}

\begin{keyword}
differential positivity, normal hyperbolicity, converse theorems.

\end{keyword}

\end{frontmatter}

\section{Introduction}
\label{sec:introduction}
A linear operator $A$
is positive if it maps a cone $\calK$ into itself, i.e.
$A\mathcal{K} \subset \mathcal{K}$ \cite{Bushell1973}.
For linear dynamical systems $x^+ = A x$, $A:\realn \to \realn$,
positivity has the natural interpretation of invariance (and contraction, if the positivity is strict)
of the cone $\calK$ along the trajectories of the system.
For continuous systems 
$\dot{x} = A x$ positivity reads 
$e^{At} \calK \subset \calK$ for any $t > 0$.

Positivity has strong implications for the trajectories of the 
linear system \cite{Bushell1973}. 
Under irreducibility assumption, Perron-Frobenius theorem 
guarantees the existence of a dominant (largest) real eigenvalue for $A$
whose associated eigenvector - the Perron-Frobenius vector $\mathbf{w}$ -
is the unique eigenvector that belongs to the interior of $\mathcal{K}$.
As a consequence, the subspace spanned by $\mathbf{w}$
is an attractor for the linear system, that is, 
for any vector $x \in \mathcal{K}$, $x\neq 0$, 
\begin{equation}
\label{eq:power_iteration}
\lim_{n\to\infty} \frac{A^n x}{|A^n x|} = \mathbf{w} \ .
\end{equation}

A classical geometric interpretation of Perron-Frobenius theorem 
is the \emph{projective contraction} of 
linear positive systems \cite{Bushell1973,Birkhoff1957}:
the rays of the cone converge towardss each other along the system dynamics.
Positivity is at the core of a number of properties of
Markov chains, consensus algorithms and large-scale control
\cite{Bushell1973,Berman1994,Farina2000,Moreau2004,Sepulchre2010,Rantzer2015}.
A straightforward example in linear algebra is the convergence of 
the power iteration algorithm \cite[Chapter 7]{Strang2006}, directly expressed by \eqref{eq:power_iteration}.

\emph{Differential positivity} extends linear positivity
to the nonlinear setting. A nonlinear system $\dot{x} = f(x)$ (or a nonlinear iteration $x^+ = F(x)$)
is differentially positive
if its linearization along any given trajectory is positive. 
A detailed characterization is provided in Section \ref{sec:diff+}. 
The intuitive idea is that the linearized flow $\partial_x \psi(\cdot,x)$ 
along the trajectory $\psi(\cdot,x)$, $\psi(0,x)=x$, maps the cone (field) $\calK$ into itself.

Differentially positive systems generalize the important class
of monotone dynamical systems  \cite{Smith1995,Angeli2003,Hirsch2006},
which are differentially positive with respect to a \emph{constant}
cone field (on vector spaces).
Not surprisingly, differential positivity restricts
the asymptotic behavior of a nonlinear system:
Figure \ref{fig:fw} illustrates the dichotomic behavior of a differentially positive system.
Analog to the Perron-Frobenius vector in the linear case, the 
Perron-Frobenius vector field is 
an attractor for the linearized dynamics
(dashed line in Figure \ref{fig:fw}). The main results of 
differential positivity \cite{Forni2016} is that the trajectories of a 
differentially positive system either converge asymptotically to
the image of an integral curve of the Perron-Frobenius vector field,
or they move transversally to the Perron-Frobenius vector field. In the latter
case the Perron-Frobenius vector field defines a direction of maximal
sensitivity \cite{Forni2016}. 
\begin{figure}[htbp]
\centering
\includegraphics[width=0.36\columnwidth]{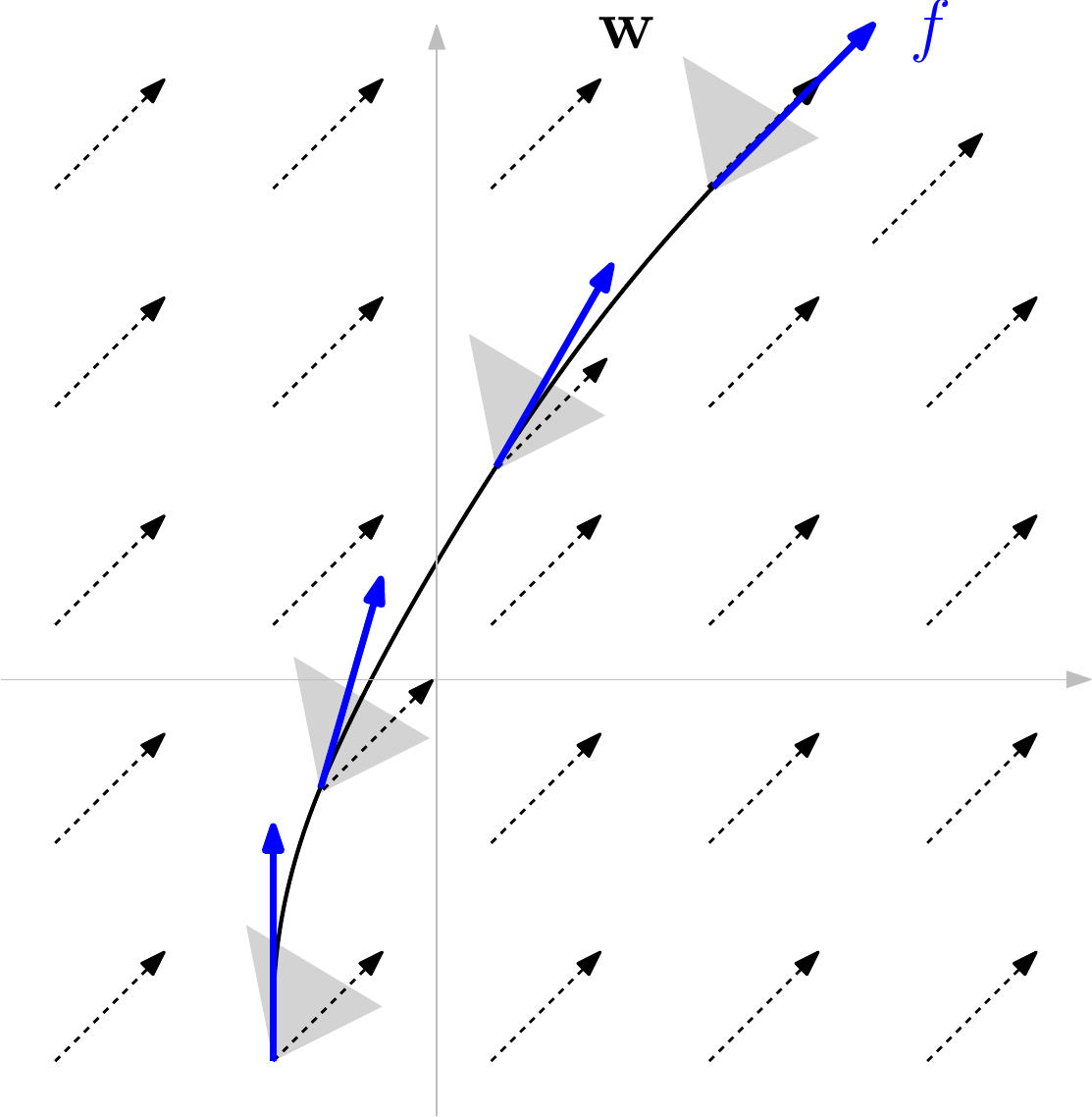}
\hspace{5mm}
\includegraphics[width=0.36\columnwidth]{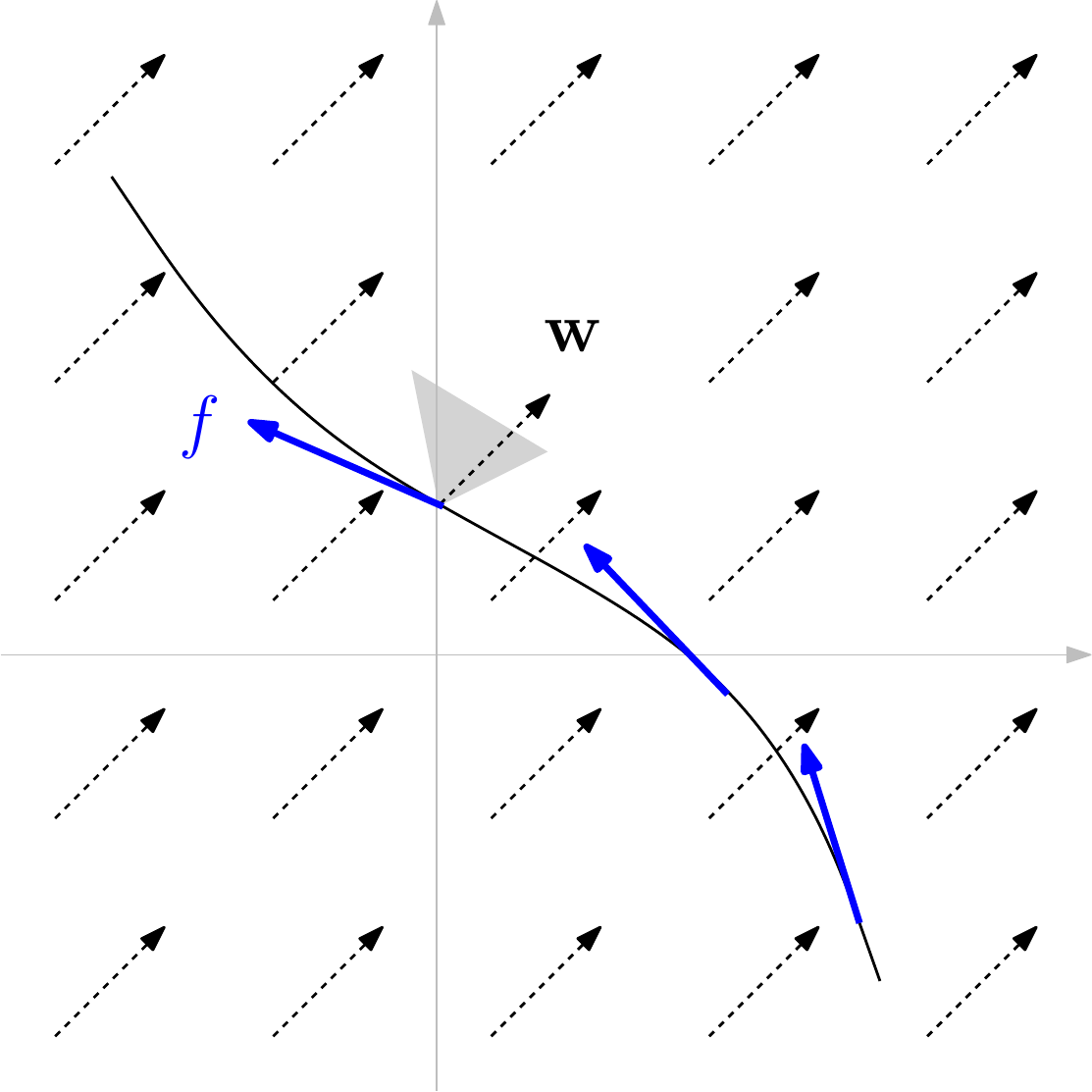}    
\caption{Left: trajectories converge to the image of an integral curve of the Perron-Frobenius vector
field $\mathbf{w}$. Right: trajectories move transversally to the Perron-Frobenius vector field $\mathbf{w}$,
which defines a direction of sensitivity.}
\label{fig:fw}
\end{figure}

In the former case of convergence, 
the image of an integral curve of the Perron-Frobenius vector field
defines a one-dimensional attractor for the nonlinear dynamics.
This attractor generalizes the subspace spanned by the Perron-Frobenius 
vector of linear positivity
(the Perron-Frobenius vector field is not constant in general).
It is either a collection of fixed points and 
connecting arcs or a limit cycle \cite{Forni2016,Forni2015}.
For this reason, differential positivity is a relevant analysis tool
for the study of bistable and periodic behaviors.

Further to the general exposition \cite{Forni2016}, the present paper strengthens 
the theory in the special case of \emph{hyperbolic} attractors. The main
result is the following converse theorem
\begin{thm}
\label{thm:converse_diff+}
A smooth dynamical system is (strictly) differentially positive in the basin of attraction of
a one-dimensional normally hyperbolic compact attractor.
\end{thm}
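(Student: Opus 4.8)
The plan is to exhibit, on the basin $\calB$, a smooth field of proper (salient, solid) cones $\calK(x)\subset T_x\calB$ that is strictly contracted by the linearized flow, which is exactly what strict differential positivity requires (Section~\ref{sec:diff+}); the cone will be cut out by a quadratic form, $\calK(x)$ being the nappe of $\{v\in T_x\calB:Q_x(v)\ge0\}$ containing $f(x)$, for a smooth field $Q_x$ of \emph{Lorentzian} signature $(1,n-1)$. This is the one-dimensional analogue of the Riemannian construction behind differential stability, a positive-definite ($(n,0)$) form being replaced by an indefinite ($(1,n-1)$) one. The raw material comes from the normal hyperbolicity of the one-dimensional attractor $\Gamma$: (i) the strong stable foliation, whose leaves fill $\calB$ and whose tangent distribution $\calV(x)$ is flow-invariant, of codimension one, and uniformly contracted; (ii) a ``horizontal'' complement transverse to $\calV$, which on $\Gamma$ is $T\Gamma$, and which — when $\Gamma$ is a limit cycle — may be taken everywhere to be $\mathrm{span}(f(x))$: the basin of a limit cycle contains no equilibria, so $f$ is nonvanishing on $\calB$ and is everywhere transverse to the strong stable foliation; (iii) a Riemannian metric $\langle\cdot,\cdot\rangle$ on $\calB$ adapted to the contraction, so that $\|\partial\psi_t(x)w\|_{\psi_t(x)}\le\lambda^t\|w\|_x$ for all $w\in\calV(x)$, $t\ge0$, with some $\lambda\in(0,1)$.

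Take first $\Gamma$ to be a limit cycle. With a free constant $\kappa>0$, set
\[
Q_x(v):=\|\Pi_{\calH}v\|_x^2-\kappa\,\|f(x)\|_x^2\,\|\Pi_{\calV}v\|_x^2 ,
\]
where $\Pi_{\calH},\Pi_{\calV}$ are the projections attached to the splitting $T_x\calB=\mathrm{span}(f(x))\oplus\calV(x)$; writing $v=\sigma f(x)+w$ with $w\in\calV(x)$ this reads $\|f(x)\|_x^2(\sigma^2-\kappa\|w\|_x^2)$, so
\[
\calK(x)=\bigl\{\,\sigma f(x)+w:\ w\in\calV(x),\ \sigma\ge\sqrt{\kappa}\,\|w\|_x\,\bigr\}
\]
is a proper cone with $f(x)$ in its interior, and $x\mapsto\calK(x)$ is as regular as the data $(f,\calV,\langle\cdot,\cdot\rangle)$. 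The weight $\|f(x)\|_x^2$ is the point of the construction. Because $\partial\psi_t(x)f(x)=f(\psi_t(x))$ and $\partial\psi_t(x)\calV(x)=\calV(\psi_t(x))$, a vector $v=\sigma f(x)+w$ on $\partial\calK(x)$ is carried to $\sigma f(\psi_t(x))+\partial\psi_t(x)w$, so
\[
Q_{\psi_t(x)}\bigl(\partial\psi_t(x)v\bigr)=\|f(\psi_t(x))\|^2\bigl(\sigma^2-\kappa\,\|\partial\psi_t(x)w\|^2\bigr)\ \ge\ \|f(\psi_t(x))\|^2\,\kappa\,\|w\|^2\,(1-\lambda^{2t})\ >\ 0
\]
for all $t>0$, using $\sigma^2=\kappa\|w\|^2$ on $\partial\calK(x)$ and the adapted contraction of $\calV$; the same estimate with $\sigma^2>\kappa\|w\|^2$ covers the interior. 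Hence $\partial\psi_t(x)\,(\calK(x)\setminus\{0\})\subset\operatorname{int}\calK(\psi_t(x))$ for all $t>0$ and all $x\in\calB$: strict differential positivity on the whole basin. (Infinitesimally, the weight is chosen so that the boundary condition $\dot Q_x(v)>0$ on $\{Q_x(v)=0,\ v\ne0\}$ collapses to the uniform contraction of $\calV$ in the adapted metric — which, for a limit cycle, is exactly the normal-hyperbolicity inequality.)

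When $\Gamma$ is not a submanifold but a finite union of hyperbolic equilibria $q_i$ joined by arcs, $f$ vanishes at the $q_i$ and the weighted cone degenerates there, so the plan is to patch. At $q_i$ the linearization $Df(q_i)$ has a simple, strictly dominant real eigenvalue, whose eigenline $E_i$ is tangent to $\Gamma$ at $q_i$; the Lorentzian cone built from $E_i$, its metric-orthogonal complement, and a constant weight is strictly contracted by $e^{t\,Df(q_i)}$ and hence — by continuity and openness of strict contraction — by $\psi_t$ on a neighbourhood of $q_i$. Along the arcs this local cone agrees, to leading order, with the weighted cone of the previous paragraph (both are ``around the arc direction'', which is $E_i$ at the endpoints), so the two descriptions are glued by a partition of unity performed on the quadratic forms $Q_x$ — a convex operation once the pieces are normalised to a common positive direction and a common negative-definite transversal — with the definite gap in the contraction rate absorbing the mismatch across overlaps.

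The step I expect to fight hardest is regularity. The strong stable foliation of a normally hyperbolic set has $C^r$ leaves but is in general only Hölder continuous transversally, so $\calV$, and with it $\calK$, need not be $C^\infty$ as produced above. In the present situation — a one-dimensional centre — the bunching is favourable and one obtains at least the $C^1$ regularity differential positivity needs; failing that, one replaces $\calV$ by a $C^\infty$ distribution $C^0$-close to it and re-runs the estimate, the robustness of the strict inequality (together with the definite normal-hyperbolicity gap) absorbing the leakage out of the now only approximately invariant $\calV$. The remaining ingredients — existence of the adapted metric and of a uniform rate $\lambda$ over the non-compact $\calB$, the proper-cone and interior-membership properties throughout (so that $f$ is the Perron--Frobenius direction of Figure~\ref{fig:fw}, left), and the verification that the patching above stays within proper cones — are routine once regularity is settled.
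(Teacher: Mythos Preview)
Your Lorentzian-form construction is a genuinely different route from the paper's. The paper does not use $f$ as the cone axis; it uses a unit section $\xi(x)$ of $T\calA$, which is nondegenerate at equilibria since $\dim T_x\calA=1$ by hypothesis, and this lets it handle limit cycles and fixed-point-with-arc attractors by a single uniform argument. Nor does the paper build a globally adapted metric: it keeps the adapted metric only on $\calA$ and a small forward-invariant neighbourhood $\calU$, then extends the cone to $\calB_\calA\setminus\calU$ by a backward-flow pullback $\calK(x):=\partial\psi^{-\tau(x)}(x_0)\,\calK^{\tau(x)}(x_0)$ with a deliberately widening family $\calK^\rho\supset\calK^{\rho'}$ for $\rho<\rho'$, so that positivity is tautological and strictness reduces to comparing the opening rate with the contraction rate already secured on $\calU$. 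For a pure limit cycle your plan is a legitimate alternative: once the globally adapted metric on $\calV$ is actually built (the same pullback trick, performed at the metric level rather than the cone level), the weighted form does the job, and the weight $\|f\|^2$ is a neat device that makes the boundary condition collapse to the normal contraction estimate.

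The gap is the general case with equilibria on $\calA$. Your weighted cone has half-angle $\theta$ with $\tan\theta=(\sqrt{\kappa}\,\|f(x)\|)^{-1}$, so it opens to a half-space as $x\to q_i$ along an arc, while the local cone built from $E_i$ has fixed aperture; the two prescriptions do \emph{not} agree to leading order on the overlap. A partition-of-unity combination $\phi Q^{(1)}+(1-\phi)Q^{(2)}$ does stay Lorentzian (since $Q^{(1)}\to 0$ degenerately and $Q^{(2)}$ dominates), but strict contraction is not inherited: differentiating along the flow on the zero set produces a term $\dot\phi\,(Q^{(1)}-Q^{(2)})$ of uncontrolled sign, and the ``definite gap in the contraction rate'' cannot be invoked until both this term and the aperture mismatch are quantified --- that is exactly the non-routine step you have deferred. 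The paper's choice of axis $\xi\in T\calA$ together with its $\calK^\rho$ pullback sidesteps the glueing entirely.
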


The paper is organized as follows.
Section \ref{sec:hyperbolic_fixed_points} 
recalls briefly the connections between
differential stability (contraction of
Riemannian metrics) and hyperbolic fixed points,
emphasizing the relevance of linearization methods
for the characterization of
asymptotic convergence to a fixed point.
A similar approach is pursued in
Sections \ref{sec:diff+}, \ref{sec:diff+_behavior} and 
\ref{sec:converse} to emphasize the analog role of
differential positivity in the characterization of 
asymptotic convergence to a one-dimensional attractor.
In particular, Section \ref{sec:diff+} introduces
the notion of differential positivity;
Section  \ref{sec:diff+_behavior} 
illustrates the main properties of differentially positive systems,
providing novel insights on the results in \cite{Forni2016};
Section \ref{sec:converse} completes and extends the results in \cite{Mauroy2015}
by proving Theorem \ref{thm:converse_diff+}.
Section \ref{sec:example} further
discusses the importance of the assumption of normal hyperbolicity.
Conclusions follow.

\section{Differential stability and hyperbolic fixed points}
\label{sec:hyperbolic_fixed_points}

\subsection{Differential stability (contraction)}
A linear system $\dot{x} = A x$, $x\in \realn$, is Lyapunov stable if for some positive 
definite matrix $P$ the Lyapunov function $V(x):= |x|_P^2 = x^TPx$ 
is non-increasing along the system trajectories
\begin{equation}
\dot{V} \ = \ x^T(A^T P + P A) x \ \leq \ 0 \qquad \forall x \in \realn \ .
\end{equation}
The strict inequality $\dot{V} < 0$ entails exponentially stability, that is,
the exponential contraction in time 
$e^{At}\calB_r \subset \calB_r$
of the ball $\calB_r := \{x\in\realn \,|\, |x|_P\leq r\}$ of radius $r>0$.

Analog to the linear case, a nonlinear system $\Sigma$ represented by 
$\dot{x} = f(x)$, $x\in \realn$, is \emph{differentially} exponentially stable (or contractive) 
if its linearization along any trajectory is exponentially stable.
This property has a pointwise geometric characterization based on 
the construction of Lyapunov-like functions for  
the prolonged system $\delta\Sigma$ \cite{Crouch1987} 
\begin{equation}
\label{eq:prolonged}
\left\{ 
\begin{array}{rcl}
\dot{x} &=& f(x) \\
\dot{\delta x} &=& \partial f(x) \delta x
\end{array} \quad\qquad (x,\delta x)\in \real^n\!\times\real^n \ .
\right.
\end{equation}
where $\partial f(x)$ is the differential of $f$ at $x$. 
Let $|\cdot|$ be any Riemannian metric. Consider
a function $V:\real^n\!\times\real^n \to \real_{\geq 0}$
such that 
\begin{itemize}
\item[(i)] there exist $0 < \lambda_{lb} <\lambda_{ub}  $, a positive integer $p$
\begin{equation}
\label{eq:Lyap}
\lambda_{lb} |\delta x|^p \leq V(x,\delta x) \leq \lambda_{ub} |\delta x|^p \qquad \forall (x,\delta x) \in \real^n\!\times\real^n \ ;
\end{equation}
\item[(ii)] there exists $\lambda >0$ 
\begin{equation}
\label{eq:dotLyap}
\dot{V}(x,\delta x) \leq -\lambda V(x,\delta x) \qquad \forall (x,\delta x) \in \real^n\!\times\real^n \ .
\end{equation}
\end{itemize}
Then, the nonlinear system is contractive in $\realn$ \cite[Theorem 1]{Forni2014}. The reader will notice that $V$ is just a Lyapunov function
lifted to the tangent bundle. In \eqref{eq:Lyap} the Riemannian metric $|\cdot|$ replaces the norms of classical Lyapunov
theory. The inequality \eqref{eq:dotLyap} characterizes the exponential stability of the linearization.
For instance, 
using $\psi:\mathbb{R}^+ \times \mathbb{R}^n \to \mathbb{R}^n$ 
to denote the nonlinear flow, we have that the pair 
$(\psi(\cdot,x), \partial_x \psi(\cdot,x) \delta x)$ is a trajectory of the prolonged system
from the initial condition $(x,\delta x)$.
Thus, 
\eqref{eq:Lyap} and \eqref{eq:dotLyap} guarantee the exponential contraction 
$\partial_x \psi(t,x)\calB_r \subset \calB_r$ of any
infinitesimal ball $\calB_r := \{\delta x \in \realn \,|\, |\delta x| \leq r \}$,
as illustrated in Figure \ref{fig:ball}.

The reader is referred to \cite{Forni2014} for a detailed 
Lyapunov characterization of contraction and for 
further extensions of the theory to general manifolds and Finsler metrics. 
Fundamental results of contraction theory and applications can be found in 
\cite{Lewis1949,Lohmiller1998,Rouchon2003,Pavlov05book,Russo10}.

\begin{figure}[htbp]
\centering
\includegraphics[width=0.7\columnwidth]{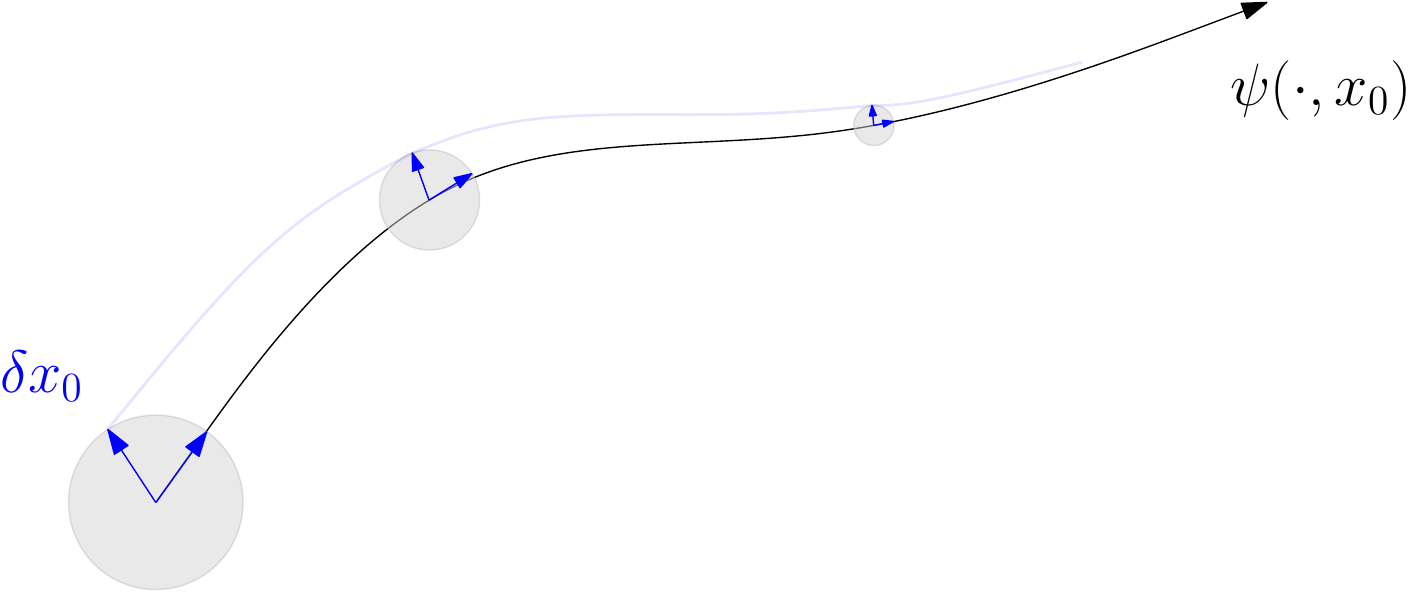}
\caption{Contraction of neighboring trajectories towardss $\psi(\cdot, x_0)$.}
\label{fig:ball}
\end{figure}

\subsection{The asymptotic behavior of differentially stable systems}

The exponential stability of the linearization \eqref{eq:Lyap}, \eqref{eq:dotLyap}
guarantees the contraction of the Riemannian metric along the system
trajectories over a uniform time horizon. As a consequence, 
there exists $K\geq 1$ such that 
\begin{equation}
\label{eq:dist_contr}
d(\psi(t,x),\psi(t,y)) \leq Ke^{-\lambda t} d(x,y)  \qquad \forall x,y\in \realn \,,\ \forall t\geq 0 \
\end{equation} 
where $d$ is the geodesic distance given by the Riemannian metric \cite[Theorem 1]{Forni2014},
as shown in Remark \ref{rem:contraction} for completeness.
Thus, $\psi(T,\cdot)$ is a contraction whenever $T\geq 0$ satisfies $Ke^{-\lambda T} < 1$. 
Using the semigroup property of the flow, by Banach fixed-point theorem
a contractive system has a unique exponentially stable fixed point $x^*$,
provided that the Riemannian metric $|\cdot|$ makes $\realn$ geodesically complete.

A simple direct argument exploits the identity
\begin{equation}
\label{eq:f}
 \frac{d}{dt} f(\psi(t,x)) = \partial f(\psi(t,x)) \frac{d}{dt}\psi(t,x) = \partial f(\psi(t,x)) f(\psi(t,x))
\end{equation} 
which holds along any trajectory $\psi(\cdot,x)$.
The identity shows that any pair $(\psi(\cdot,x),f(\psi(\cdot,x)))$ 
is a trajectory of the prolonged system \eqref{eq:prolonged}.
Therefore, by \eqref{eq:Lyap} and \eqref{eq:dotLyap},
every trajectory $\psi(\cdot,x)$ converges to a fixed point since 
\begin{equation} 
\label{eq:f_contr}
\lim\nolimits\limits_{t\to\infty} \left|\frac{d}{dt}\psi(t,x)\right|  =  \lim_{t\to\infty} |f(\psi(t,x))| =  0 \ .
\end{equation} 
Uniqueness follows from \eqref{eq:dist_contr}, by contradiction.
Furthermore, $x^*$ is exponentially stable by Lyapunov's first theorem, since  
\eqref{eq:Lyap} and \eqref{eq:dotLyap} hold at $x^*$.
The existence of a unique fixed point is indeed a necessary condition for 
contractive time-invariant systems. 
\begin{remm}
\label{rem:contraction}
Consider \eqref{eq:Lyap} and \eqref{eq:dotLyap}.
For any $x,y\in\realn$, 
consider the geodesic curve $\gamma(\cdot):[0,1] \to \calU$ 
connecting $\gamma(0) = x$ to $\gamma(1) = y$ with length 
$\ell(\gamma) := \int_0^1 \left|\frac{d}{ds}\gamma(s)\right| ds = d(x,y)$.
Then, the geodesic distance $d$ satisfies \eqref{eq:dist_contr} since
\begin{subequations}
\begin{align}
d(\psi(t,x),\psi(t,y))  
& \leq \! \int_0^1 \! \left|\frac{d}{ds} \psi(t,\gamma(s))\right | \!  ds   \nonumber  \\
&\leq \! \int_0^1 \!\! Ke^{-\lambda t} \!\left|\frac{d}{ds} \gamma(s)\right | \! ds = Ke^{-\lambda t} d(x,y)  
\end{align}
\end{subequations}
for some $K\geq 1$, 
where the second inequality follows from \eqref{eq:Lyap} and \eqref{eq:dotLyap}.
For instance, for any curve $\gamma(\cdot)$
\begin{subequations}
\begin{align}
\frac{d}{dt} \frac{d}{ds} \psi(t,\gamma(s))
&= \frac{d}{ds} \frac{d}{dt} \psi(t,\gamma(s)) \nonumber \\
&= \frac{d}{ds} f(\psi(t,\gamma(s)))  \nonumber \\
&= \partial f(\psi(t,\gamma(s))) \frac{d}{ds} \psi(t,\gamma(s))  \nonumber
\end{align}
\end{subequations}
which shows that $\frac{d}{ds} \psi(t,\gamma(s))$ is 
a trajectory of the linearized dynamics 
$\dot {\delta x} = \partial f(\psi(t,\gamma(s))) \delta x$
from the initial condition $\frac{d}{ds}\gamma(s)$.
Thus, by \eqref{eq:Lyap} and \eqref{eq:dotLyap}
\begin{equation}
\label{eq:delta_convergence}
\left |\frac{d}{ds} \psi(t,\gamma(s))\right| \leq K e^{-\lambda t} \left |\frac{d}{ds} \gamma(s)\right |
\end{equation}
for some $K \geq 1$.
\end{remm}

\subsection{Hyperbolic fixed points}

By Hartman-Grobman theorem, the trajectories 
of the system in a small neighborhood of a hyperbolic fixed point are topologically conjugated 
to the trajectories of the linearized dynamics \cite{Hartman1960}. 
Thus, at any equilibrium $0=f(x^*)$, the eigenvalues of the  
Jacobian matrix $\partial f(x^*)$ dictate the stability property of the fixed point. 
If $\partial f(x^*)$ is a Hurwitz matrix then $x^*$ is locally asymptotically stable.
A Hurwitz Jacobian matrix also guarantees that the nonlinear system
is contractive in a small neighborhood $\calU$ of the fixed point: by continuity, there
exists $P>0$ such that 
$\partial f(\psi(t,x))^T P + P \,\partial f(\psi(t,x)) < 0$ 
holds for each $x$ in a small neighborhood $\calU$
of the fixed point. Thus, \eqref{eq:Lyap} and \eqref{eq:dotLyap} hold in $\calU$
for $V := \delta x^T P \delta x$.

As a matter of facts, 
the existence of a unique hyperbolic fixed point is also a sufficient condition
for contraction in the whole $\realn$, as summarized in the next theorem.
A detailed analysis can be found in \cite{Giesl2015}. 
We provide a new proof based on \cite[Theorem 2.3]{LanMezicJan2013}. 

\begin{thm}
A smooth dynamical system
is contractive in the basin of attraction
of a hyperbolic fixed point.
\end{thm}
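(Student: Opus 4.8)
\medskip
\noindent\textbf{Proof strategy.}
The plan is to pull the quadratic Lyapunov metric of the linearization back to the whole basin of attraction through a linearizing change of coordinates. Let $x^*$ denote the hyperbolic fixed point and $\calB\subseteq\realn$ its basin of attraction; since $x^*$ is attractive and hyperbolic, $A:=\partial f(x^*)$ is Hurwitz. The first step is to invoke \cite[Theorem 2.3]{LanMezicJan2013} to obtain a $C^1$ diffeomorphism $T:\calB\to\realn$ with $T(x^*)=0$ that conjugates the nonlinear flow to the linear one:
\begin{equation}
\label{eq:conj}
T(\psi(t,x)) \ = \ e^{At}\,T(x) \qquad \forall\, x\in\calB\,,\ \forall\, t\geq 0\ .
\end{equation}

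The second step is the pullback construction. Choose $P>0$ solving $A^TP+PA=-I$ and define, for $(x,\delta x)\in\calB\times\realn$,
\begin{equation}
\label{eq:Vpull}
V(x,\delta x)\ :=\ \delta x^T\,\partial T(x)^T P\,\partial T(x)\,\delta x\ .
\end{equation}
Since $T$ is a diffeomorphism, $\partial T(x)$ is invertible for every $x$, so $\calM(x):=\partial T(x)^T P\,\partial T(x)$ is a Riemannian metric on $\calB$ and $T$ is an isometry from $(\calB,\calM)$ onto $(\realn,P)$; as the latter is geodesically complete, so is the former. Taking the reference Riemannian metric in \eqref{eq:Lyap} to be $\calM$ itself makes \eqref{eq:Lyap} hold with $p=2$, since $V(x,\cdot)$ is then exactly the squared $\calM$-length of the tangent vector.

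The third step verifies \eqref{eq:dotLyap}. Differentiating \eqref{eq:conj} in $x$ yields $\partial T(\psi(t,x))\,\partial_x\psi(t,x)=e^{At}\,\partial T(x)$, so along any trajectory $(\psi(t,x),\delta x(t))$ of the prolonged system \eqref{eq:prolonged} the curve $\delta y(t):=\partial T(\psi(t,x))\,\delta x(t)=e^{At}\,\partial T(x)\,\delta x(0)$ solves $\dot{\delta y}=A\,\delta y$. Hence $V(\psi(t,x),\delta x(t))=\delta y(t)^T P\,\delta y(t)$ and
\begin{equation}
\dot V\ =\ \delta y^T(A^TP+PA)\,\delta y\ =\ -\,\delta y^T\delta y\ \leq\ -\lambda\,V\,,\qquad \lambda:=1/\lambda_{\max}(P)>0\,,
\end{equation}
which is exactly \eqref{eq:dotLyap}. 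Applying the manifold version of \cite[Theorem 1]{Forni2014} on the complete Riemannian manifold $(\calB,\calM)$ then gives contraction in $\calB$, and the unique fixed point so produced is $x^*$; this proves the claim.

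The substance of the argument is concentrated in the first step: one needs a genuinely \emph{global}, and at least $C^1$, linearizing transformation on the whole basin, whereas \cite{Hartman1960} provides only a local $C^0$ conjugacy, which is too weak to pull back a Riemannian metric. This is precisely what \cite[Theorem 2.3]{LanMezicJan2013} supplies; granted it, the remainder is the classical linear Lyapunov computation transported by the change of coordinates. A self-contained alternative would be to build $\calM$ directly --- e.g. by taking the constant metric $P$ on a neighborhood of $x^*$ and transporting it over $\calB$ along the flow --- but this essentially reconstructs such a linearization, so I would rather cite it.
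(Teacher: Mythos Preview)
Your proof is correct and follows essentially the same route as the paper: invoke the global $C^1$ linearization of \cite[Theorem~2.3]{LanMezicJan2013} on the basin, pull back the quadratic Lyapunov metric $P$ of the linearization to obtain $V(x,\delta x)=\delta x^T\partial T(x)^T P\,\partial T(x)\,\delta x$, and conclude via \cite[Theorem~1]{Forni2014}. Your write-up is slightly more explicit (the value of $\lambda$, the isometry/completeness observation, and the remark on why Hartman--Grobman is insufficient), but the argument is the same.
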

\begin{proof}
Without loss of generality, consider a hyperbolic fixed point
$x^* = 0$ and let $\cal{B}$ be the basin of attraction of $x^*$.
By \cite[Theorem 2.3]{LanMezicJan2013} 
there exists a differentiable $h:\calB \to \realn$ such that
$y = h(x)$ is a $C^1$ diffeomorphism with 
$\partial h(0) = I$ and $\dot{y} = \partial f(0) y =: A y$.
By hyperbolicity, there exists a symmetric and positive definite
matrix $P$ such that $A^T P + P A \leq -\lambda P$
for some $\lambda > 0$.
 
In the $y$ coordinates, consider now the Riemannian metric $|\cdot|_P$
defined by $|\delta y|_P := \sqrt{\delta y^T P \delta y}$ at each $y$.  
\eqref{eq:Lyap} and \eqref{eq:dotLyap} are trivially satisfied by 
selecting $V := \delta y^T P \delta y$. 
From \cite[Theorem 1]{Forni2014}
the nonlinear system is contractive in
the basin of attraction $h(\calB)=\mathbb{R}^n$.
In the original coordinates,
the metric is represented by 
$ |\delta x|_P := \sqrt{\delta x^T \partial h(x)^T P \partial h(x) \delta x}$ 
at each $x$ and
$V := \delta x^T \partial h(x)^T P \partial h(x) \delta x$.
\end{proof}

\section{Differential positivity}
\label{sec:diff+}

Differential positivity replaces the contraction 
of a metric, the shrinking ball in Figure \ref{fig:ball},
with the contraction of a cone, illustrated 
in Figure \ref{fig:cone}. Projective contraction
- the contraction of the rays of the cone -
frees one direction of the contraction property.
The result is that the trajectories are not 
forced to converge to a fixed point anymore.
The unique fixed point is replaced by one-dimensional 
attractors, the image of a curve.
\begin{figure}[htbp]
\centering
\includegraphics[width=0.7\columnwidth]{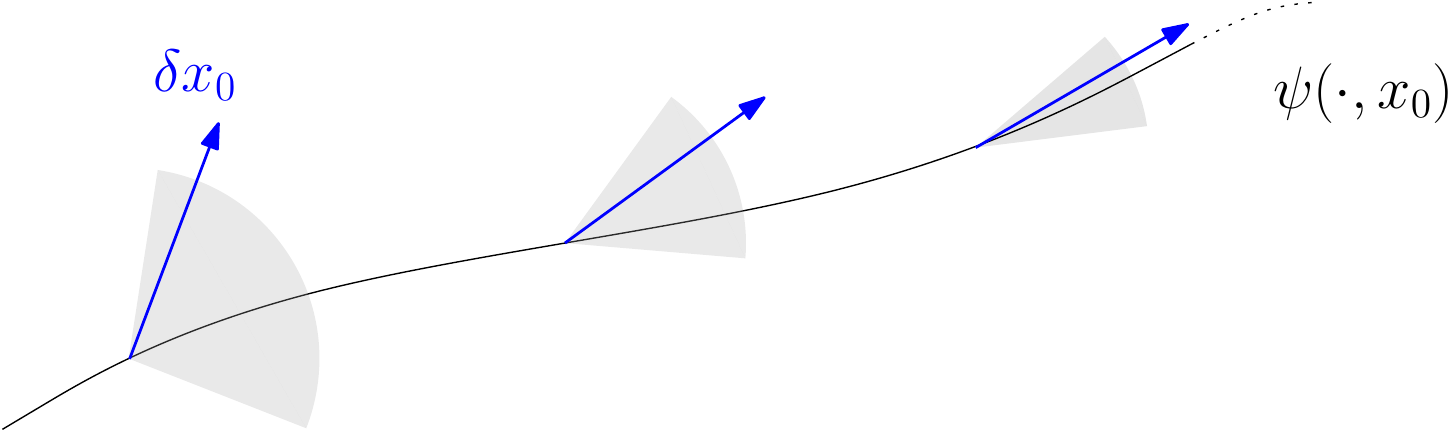}
\caption{Contraction of a cone along  $\psi(\cdot, x_0)$.}
\label{fig:cone}
\end{figure}

Differential positivity is defined below
for complete continuous 
systems $\dot{x} = f(x)$,
whose state belongs to a smooth $n$-dimensional 
Riemannian manifold $\calX$. 
In such a case the state $(x,\delta x)$ of the prolonged system
\eqref{eq:prolonged} belongs to the tangent bundle $T\calX$.
The manifold $\calX$ is endowed with a cone field
\begin{equation*}
\mathcal{K}(x) \subseteq T_x \mathcal{X} \qquad \forall x \in \mathcal{X} 
\end{equation*}
which makes $\calX$ a conal manifold \cite{Lawson1989,Neeb1993}.
Each cone $\mathcal{K}(x)$ is closed 
and solid, and satisfies the following 
properties: for all $x \in \mathcal{X}$, 
(i) $\mathcal{K}(x)+\mathcal{K}(x) \subseteq \mathcal{K}(x)$, 
(ii) $\alpha \mathcal{K}(x) \subseteq \mathcal{K}(x) $ for all $\alpha \in \mathbb{R}^+$, (iii) $\mathcal{K}(x) \cap -\mathcal{K}(x)=\{0\}$
(i.e. convex and pointed). 
We assume that $\calK$ is continuous and piecewise differentiable, i.e.
the boundary of $\calK$ has a continuous and piecewise differentiable
parameterization.

Positivity of the linearization has the precise interpretation 
of forward invariance and contraction 
of the cone field along the prolonged dynamics 
$\delta \Sigma$ \cite{Forni2016},
as clarified in the following definition.
For simplicity,
in what follows we use 
$\psi^t(\cdot):\calX\to\calX$ to denote
$\psi^t(x) := \psi(t,x)$ where $\psi(\cdot,x) \in \Sigma$.
\begin{defi}[differential positivity]
\label{def:diff_pos}
The system $\Sigma$ is differentially positive 
(with respect to the cone field $\mathcal{K}$) in $\calX$ 
if the flow of the prolonged system \eqref{eq:prolonged}
leaves the cone invariant
\begin{equation}
\label{eq:diff+}
\partial \psi^t(x) \mathcal{K}(x) \subseteq \mathcal{K}(\psi^t(x)) \qquad \forall x\in \mathcal{X}\,, \forall t>0\,.
\end{equation}
In addition, a differentially positive system $\Sigma$ is (uniformly) strictly differentially positive if there exist a constant $T>0$ and a cone field $\mathcal{R}(x) \subset \textrm{int} \mathcal{K}(x) \cup \{0\}$ such that
\begin{equation}
\label{eq:sdiff+}
\partial \psi^t(x) \mathcal{K}(x) \subseteq \mathcal{R}(\psi^t(x)) \qquad \forall x\in \mathcal{X}\,, \forall t \geq T\,. 
\end{equation} 
\end{defi}
To avoid pathological cases, we assume that for every pair of points
$x_1,x_2\in \calX$, 
there exists a linear invertible mapping  
$\Gamma(x_1,x_2):T_{x_1}\calX \to T_{x_2}\calX$ 
such that 
$\Gamma(x_1,x_2)\calK(x_1) = \calK(x_2)$
and
$\Gamma(x_1,x_2)\calR(x_1) = \calR(x_2)$.

Strict differential positivity 
guarantees that every tangent vector $\delta x$ 
on the boundary of $\calK(x)$ is mapped into 
$\partial \psi^t(x) \delta x \in \calR(\psi^t(x))$
at time $t\geq T$.
Combining this property with the linearity of the mapping 
leads to the projective contraction in Figure \ref{fig:cone}.
The result has a metric characterization 
based on a generalization of the argument
in \cite{Bushell1973}. In particular, 
\cite{Bushell1973} shows that linear positive maps 
contract the Hilbert metric (projective metric).
\cite{Bushell1973} reduces 
the existence of the Perron-Frobenius eigenvector 
for linear positive maps to an application of the
Banach fixed point theorem. 
Differentially positive systems enjoy a similar property.

From \cite[Section VI]{Forni2016},
for any given $x\in \calX$,
take any $\delta x, \delta y \in \calK(x)\setminus\{0\}$ and define 
$
M_{\calK(x)}(\delta x,\delta y) \!:=\!
 \inf \{\lambda \!\geq\! 0 \,:\, \lambda \delta y - \delta x \!\in\! \calK(x) \} 
$ \footnote{
$M_{\calK(x)}(\delta x,\delta y) := \infty$
when $ \{\lambda \in \real_{\geq 0} \,:\, \lambda \delta y - \delta x \in \calK(x) \} = \emptyset$.
}
and
$
m_{\calK(x)}(\delta x, \delta y) := 
 \sup \{\lambda \geq 0 \,:\, \delta x - \lambda \delta y  \in \calK(x) \} 
$.
The Hilbert metric $h_{\calK(x)}$ reads
\begin{equation}
\label{eq:lifted_hilbert_metric}
 h_{\calK(x)}( \delta x, \delta y) := \log \left(\frac{ M_{\calK(x)}(\delta x,\delta y)}{ m_{\calK(x)}(\delta x,\delta y)}\right) \ .
\end{equation}

$h_{\calK(x)}$ is a projective metric
that measures the distance between rays of the cone.
In particular, 
for any $\delta x,\delta y \in \calK(x)$,
$h_{\calK(x)}(\delta x,\delta y) = 0$
if and only if $\delta x=\lambda \delta y$ with $\lambda \geq 0$,
and $h_{\calK(x)}(\alpha \delta x,\beta \delta y)=
h_{\calK(x)}(\delta x,\delta y)$ for any $\alpha>0$ and $\beta>0$.
The Hilbert metric $h_{\calK(x)}$ reduces to a metric in 
$\calK(x) \cap \{\delta x \in T_x\calX\,:\, |\delta x|_x \!= \!1\} $.
We make the following standing assumption.
\begin{assumption}
\label{assume:completeness}
For all $x\in \calX$, the pair
$\calK(x)\cap \{\delta x \in T_x\calX\,:\, |\delta x| =1\}$ and 
$h_{\calK(x)}$ is a complete metric space. 
\end{assumption}
We recall that $|\cdot|$ is the Riemannian metric on $\calX$.
Examples of complete metric spaces on cones
can be found in \cite{Bushell1973,Lemmens2012,Zhai2011}.

The contraction of the cone field along trajectories
is captured by the exponential convergence of the Hilbert metric.
Following \cite[Theorem 2]{Forni2016},
for a strictly differentially positive
system there exist $\rho\geq1$ and $\lambda > 0$ 
such that, for all $x\in \calX$, $\delta x_1,\delta x_2 \in \calK(x)$, 
and $t\geq T$, 
\begin{equation}
\label{eq:Hilbert_contraction}
 h_{\calK(\psi^t(x))}
 (\partial \psi^t(x) \delta x_1,\partial \psi^t(x) \delta x_2) 
 \leq \rho e^{-\lambda(t-T)} \Delta 
\end{equation}
where $\Delta := 
\sup \{ h_{\calK(x)}(v_1,v_2) \,:\, v_1,v_2 \in \calR(x) \} < \infty$.

Assumption \ref{assume:completeness} and the 
projective contraction \eqref{eq:Hilbert_contraction}
lead to the existence of the so-called 
Perron-Frobenius vector field $\mathbf{w}(x) \in \textrm{int}\calK(x)$,
the differential equivalent of the Perron-Frobenius eigenvector
of linear positive mappings. 
By \cite[Theorem 3]{Forni2016},
for any $x\in \calX$, $\mathbf{w}$ is a continuous vector field such that
\begin{equation*}
\label{eq:def_PF_vec}
\mathbf{w}(x) := 
\lim_{t \rightarrow \infty} \frac{\partial \psi^t(\psi^{-t}(x)) \delta x}{|\partial \psi^t(\psi^{-t}(x)) \delta x |}
\qquad
\mbox{where }\delta x \in \mathcal{K}(\psi^{-t}(x)) \setminus \{0\} \ .
\end{equation*}
$\mathbf{w}$ is invariant as a field of rays, 
that is, 
$\mathbf{w}(\psi^t(x)) =  \partial\psi ^t(x) \mathbf{w}(x) / |\partial\psi ^t(x) \mathbf{w}(x)|$.
Furthermore, 
for any $x\in \calX$ and 
$\delta x \in \calK(x)$, 
\begin{equation}
\label{eq:PF_convergence}
\lim_{t\to\infty}
h_{\calK(\psi^t(x))}(\mathbf{w}(\psi^t(x)), \partial \psi^t(x)\delta x) = 0 \ .
\end{equation}

The observation in \eqref{eq:f} makes clear that 
\eqref{eq:PF_convergence} is at the root of the dichotomic behavior illustrated in Figure \ref{fig:fw}.
In brief, for any trajectory $x(\cdot)$, either $\dot{x}(t) \in \calK(x(t))$ for some $t$,
which forces the trajectory to converge asymptotically towardss an integral curve of the Perron-Frobenius
vector field, or $\dot{x}(t) \notin \calK(x(t))$ for all $t$, which determines the transversality
between $\dot{x}(t)$  and $\mathbf{w}(x(t))$, leading to sensitivity. 

In what follows, for simplicity, we call \emph{Perron-Frobenius curve} any integral curve
of the Perron-Frobenius vector field.

\begin{remm}
Differential positivity makes contact with the 
Alekseev and Moser criterion which 
infers the hyperbolicity of 
an attractor $\calA$ from the existence of 
two invariant cone fields defined at each $x\in \calA$.
\cite[Chapter 3]{HandbookDynamicalSystemsV1A}.
Differential positivity 
also requires the existence of an invariant cone field.
Differential positivity exploits the 
contraction of the cone field
everywhere in the system state manifold
(or in any forward invariant subset) 
to characterize the global asymptotic behavior of the system.
In this sense, differential positivity shows similarities
with \cite{Wojtkowski1985}, which
uses the invariance of cone fields on $\calX$
to characterize the Lyapunov exponents of the system.
\end{remm}

\section{The asymptotic behavior of a differentially positive system}
\label{sec:diff+_behavior}

In order to make the paper self-contained, we summarize
in this section the main results of \cite{Forni2016},
which characterize the asymptotic behavior of differentially positive systems.

The $\omega$-limit set $\omega(\xi)$, $\xi \in \mathcal{X}$, is the set
\begin{equation*}
\omega(\xi) := \bigcap_{T \in \mathbb{R}} \overline{\{\psi^t(\xi),t>T\}}\ ,
\end{equation*}
where $\overline{\phantom{X}}$ denotes the closure of the set. For every $\xi\in \calX$,
the $\omega$-limit set
of a (complete) strictly differentially positive system 
whose trajectories are all bounded
satisfies one of the following two properties \cite[Theorem 4]{Forni2016}:
\begin{itemize}
\item[(i)] The vector field $f(x)$ is aligned with the Perron-Frobenius vector field $\mathbf{w}(x)$
for each $x\in \omega(\xi)$ (i.e. $f(x)=\lambda(x) \mathbf{w}(x)$, $\lambda(x) \in \real$), and $\omega(\xi)$ is either a fixed point 
or a limit cycle or a set of fixed points and connecting arcs; 
\item[(ii)] The vector field $f(x)$ is not aligned with the 
Perron-Frobenius vector field $\mathbf{w}(x)$ for each $x \in \omega(\xi)$ 
such that $f(x)\neq 0$, 
and either 
$
\liminf\nolimits\limits_{t\to\infty} 
|\partial \psi^t(x)\mathbf{w}(x)| = \infty
$
or 
$\lim\nolimits\limits_{t\to\infty} f(\psi(t,x)) = 0$.
\end{itemize}

\subsection{Convergence to a one-dimensional attractor}
The behavior (i) is akin to Poincar{\'e}-Bendixson theorem 
for planar systems \cite{Hirsch1974}. It holds if 
the Perron-Frobenius vector field is complete and
if the following holds:
\begin{equation}
\label{eq:PF_boundedness}
\limsup_{t \to \infty} |\partial \psi^t(x)\mathbf{w}(x)| < \infty 
\qquad\quad  \forall x\in \calC \ .
\end{equation}
In such a case the attractor is given by the image of a Perron-Frobenius curve.
The combination of \eqref{eq:PF_boundedness} with the projective contraction 
of strict differential positivity
leads to contraction transversally to the Perron-Frobenius vector field:
$n-1$ directions of contraction for an $n$-dimensional state manifold $\calX$. 
As a consequence, every pair of trajectories in $\calC$ converge asymptotically
to the image of a unique Perron-Frobenius curve. The completeness of
the vector field ensures existence and uniqueness of such a curve. 
A detailed argument is developed in Remark \ref{rem:horizontal_contraction}.

\begin{remm}
\label{rem:horizontal_contraction}
We first establish contraction transversal to the Perron-Frobenius vector field:
for every $x\in \calC$ and $\delta x \in T_x \calX$, 
\begin{itemize}
\item[(a)]
$\limsup\nolimits\limits_{t\to\infty}|\partial \psi^t(x)\delta x|< \infty$; and
\item[(b1)] either  
$\lim\nolimits\limits_{t\to\infty}|\partial \psi^t(x)\delta x| = 0$
\item[(b2)] or
$\lim\nolimits\limits_{t\to\infty} h_{\calK(\psi^t(x))} \left(q \partial \psi^t(x) \delta x, \mathbf{w}(\psi^t(x))\right) = 0$
 for some $q\in \{1,-1\}$.
\end{itemize}
If $\delta x \in -\calK(x) \cup \calK(x)$ 
then (a) follows from the fact that $\mathbf{w}(\psi^t(x))$
is a dominant direction and $\partial \psi^t(x)$ is a linear operator. 
(b2) follows by projective contraction and linearity of $\partial \psi^t(x)$. 
(b1) may also occur.
For $\delta x \notin -\calK(x) \cup \calK(x)$  either there exists $\tau>0$ such that 
$\partial \psi^\tau(x) \delta x \in -\calK(\psi^\tau(x))\cup \calK(\psi^\tau(x))$ and (a),(b2) 
follow by the argument above, and (b1) may occur, or 
$\partial \psi^t(x) \delta x \notin \calK(\psi^t(x))$ for all $t \geq 0$.
For this last case, 
take $\delta y := \alpha \mathbf{w}(x) + \delta x$. For $\alpha>0$ sufficiently
large $\delta y \in \calK(x)$ and (a),(b2) hold for $\delta y\in \calK(x)$. 
Therefore $\lim\nolimits\limits_{t\to\infty} | \partial\psi^t(x)\delta x| = 0$,
as claimed in (b1) (and (b1) implies (a)).

We now look at the convergence among trajectories by 
parameterizing their initial conditions with a curve $\gamma(\cdot):[0,1] \to \calC$. 
For each $s\in [0,1]$, $\psi^t(\gamma(s))$ 
converge asymptotically to the image of a Perron-Frobenius curve
as $t\to\infty$.  This follows from the observation that for each $s\in[0,1]$ 
$\frac{d}{ds}\psi^t(\gamma(s))=\partial \psi^t(\gamma(s)) \frac{d}{ds} \gamma(s)$
and the pair $\gamma(s) \in \calC$, $\frac{d}{ds} \gamma(s) \in T_{\gamma(s)}\calX$ satisfies
(a),(b2) or (b1) along the flow of the system.
The Perron-Frobenius limit curve must be unique. 
Otherwise, there exists a curve $\gamma(\cdot):[0,1] \to \calC$
connecting several Perron-Frobenius limit curves
such that, for some $s\in [0,1]$, the tangent vector $\frac{d}{ds} \gamma(s)$
along the linearized flow
$\frac{d}{ds}\psi^t(\gamma(s)) = \partial \psi^t(\gamma(s)) \frac{d}{ds} \gamma(s)$ 
does not satisfy one of (a) and (b2), and it does not satisfy (b1).
\end{remm}

Various assumptions may ensure the ultimate boundedness property \eqref{eq:PF_boundedness}.
For instance, \cite[Corollary 2]{Forni2016}
ensures the existence of a unique attractive limit cycle
in $\calC$ under the simpler assumption 
\begin{equation}
\label{eq:limit_cycles}
f(x) \in \mathit{int}\calK(x) \qquad \quad \forall x\in \calC \ .
\end{equation}
\eqref{eq:limit_cycles} guarantees
that the trajectories do not converge to a fixed point, 
since the magnitude of the vector field is bounded from 
below $|f(x)| > b > 0$ on any orbit (by boundedness of trajectories). 
\eqref{eq:limit_cycles} also implies \eqref{eq:PF_boundedness} (see Lemma 1 in \cite{Forni2016}) 
thus transversal contraction with respect to the Perron-Frobenius vector field. 
Finally, \eqref{eq:limit_cycles} guarantees that 
every $\omega$-limit set in $\calC$ satisfies (i). 
The vector field $f(x)$ at each point $x$ of the
$\omega$-limit set is aligned with 
the Perron-Frobenius vector field, therefore every trajectory
in the neighborhood of the $\omega$-limit set is attracted
transversally towards the $\omega$-limit set. 
As a consequence, the return map on a Poincar{\'e} section transversal to
the $\omega$-limit set is necessarily contractive, leading
to the existence of a closed orbit. The reader is referred to
\cite[Corollary 2]{Forni2016} and \cite[Theorem 6]{Forni2015} for
a detailed discussion.

The asymptotic stability of the limit cycle readily follows
from the analysis of the linearization. Let $x$ be a point on the closed orbit
and let $\tau>0$ be the period such that $\psi^\tau(x) = x$. By strict differential 
positivity \footnote{Necessarily $T$ in \eqref{eq:sdiff+} satisfies $T \leq \tau$.},
\begin{equation*}
\partial \psi^\tau(x) \calK(x) \subseteq \mathit{int}\calK(\psi^\tau(x))\cup\{0\} = \mathit{int}\calK(x)\cup\{0\} \ .
\end{equation*}
$\partial \psi^\tau(x)$ is thus a positive linear operator 
with dominant eigenvector $\mathbf{w}(x) = \frac{f(x)}{|f(x)|}$ since, by periodicity 
of $\partial \psi^\tau(x)$ and using \eqref{eq:f}, 
$\partial \psi^\tau(x) f(x) = f(\psi^\tau(x)) = f(x)$. The dominant eigenvalue clearly has magnitude 1.
By strict differential positivity the other eigenvalues - the Floquet's characteristic multipliers
of the system - have magnitude less than one. It follows that the closed orbit
is asymptotically stable  \cite[Chapter 13]{Hirsch1974}.

\subsection{Sensitive behaviors}

(ii) characterizes $\omega$-limit sets of points $x\in \calX$
for which \eqref{eq:PF_boundedness} does not hold. In such a case, 
the Perron-Frobenius vector field is a direction of maximal sensitivity.
Fixed points or periodic orbits are unstable because of
the asymptotic separations among trajectories in a small neighborhood 
of the $\omega$-limit set, in the direction of the Perron-Frobenius vector field.
However, strict differential positivity still allows for one-dimensional
attractors defined by a collection of fixed points and connecting arcs.
The simplest example is given by a bistable system, with 
two stable fixed points and a saddle point $x^*$. 
If the linearization of the system at the saddle
has $n-1$ eigenvalues with negative real part, then
strict differential positivity guarantees that all the fixed points 
belong to the image of a Perron-Frobenius curve.
Indeed, the whole heteroclinic orbit connecting the 
unstable manifold of the saddle to any stable fixed points 
is contained within the image of a Perron-Frobenius curve.
This follows from the continuity of the Perron-Frobenius vector field,
combining the fact that the tangent space of the unstable manifold at $x^*$ is spanned by $ \mathbf{w}(x^*)$
with the invariance of the Perron-Frobenius vector field
$\mathbf{w}(\psi^t(x)) =  \partial\psi ^t(x) \mathbf{w}(x) / |\partial\psi ^t(x) \mathbf{w}(x)|$
and with the invariance of the unstable manifold.

Further results related to case (ii) can be found in 
\cite{Forni2016} and \cite{Forni2015}. Notably, 
the relevant property of almost global convergence 
of monotone systems is revisited through differential positivity.
In general, however, the relation between differential positivity and nonlinear 
behavior for case (ii) requires further investigation. 
It is an open question, for example, whether or not a 
strictly differentially positive system may have strange attractors.

\section{Normally hyperbolic attractors}
\label{sec:converse}

At a fixed point $x^*$, strict differential positivity reduces to classical positivity of the linearization 
since $\partial \psi^t(x^*) \calK(x^*) \subseteq \mathrm{int}\calK(x^*)$ for all $t > 0$.
In such a case, $\mathbf{w}(x^*)$ is the dominant eigenvector of 
$\partial f(x^*)$ and the stability of the fixed point is determined by the associated eigenvalue $\lambda_{\mathbf{w}(x^*)}$.
$\mathbf{w}(x^*)$ can be an unstable direction - $\liminf\nolimits\limits_{t\to\infty} 
|\partial \psi^t(x^*)\mathbf{w}(x^*)| = \infty$ of (ii) - 
which shows that differential positivity does not imply stability.
On the other hand, the tangent space splits into
\begin{equation}
	T_{x^*}\calX = W(x^*) \oplus N(x^*)
\end{equation}
where $W(x^*) := \{\lambda \mathbf{w}(x^*)\,:\, \lambda \in\real\}$
and $N (x^*)$ is given by the span of the remaining eigenvectors of
$\partial f(x^*)$. The linearized flow contracts $N(x^*)$ 
more sharply than $W(x^*)$ or it expands $W(x^*)$ more sharply than $N(x^*)$, that is, 
for some  $c(x^*)>0$,
\begin{equation*}
	\log\left(\frac{|\partial \psi^t(x^*) \mathbf{w}(x^*)|}{|\partial \psi^t(x^*) \delta x|}\right) \geq c(x^*) t
	\qquad\quad \forall \delta x \in N(x^*) , |\delta x| = 1 \ .
\end{equation*}
A separation of subspaces and rates is akin to normal hyperbolicity \cite{Hirsch1977}.
\begin{remm}
On any fixed point $x^*$, strict differential positivity guarantees that $ e^{\partial f(x^*) t} \mathbf{w}(x^*) \in W(x^*)$
which makes $\mathbf{w}(x^*)$ an eigenvector of  $\partial f(x^*)$.
The real part of an eigenvalue $\lambda_v$ associated to any other eigenvector $v$ of $\partial f(x^*)$
must satisfy $\lambda_v < \lambda_{\mathbf{w}(x^*)}$, otherwise \eqref{eq:PF_convergence}
would not hold. $c(x^*)$ is thus any positive constant such that $\lambda_{\mathbf{w}(x^*)} - \lambda_v \geq c(x^*) > 0 $.
\end{remm}

We need the following definitions and assumptions.

\begin{defi}
\label{def:normal_attractor}
A connected manifold $\mathcal{A} \subseteq \mathcal{X}$ 
 is an attractor for the flow $\psi$ if it satisfies the following two properties:
\begin{itemize}
\item $\psi^t(\mathcal{A}) \subseteq \mathcal{A}$ for all $t>0$;
\item There exists a neighborhood $\mathcal{U} \subset \mathcal{X}$ of $\mathcal{A}$ such that,
for any $x\in \calU$, the limit set $\omega(x)$ is in $\mathcal{A}$.
\end{itemize}
\end{defi}

Definition \ref{def:normal_attractor}
allows for the existence of a smaller closed subset satisfying 
the above properties. For example, the definition includes the case 
of a set of fixed points and their connecting orbits. 

In what follows we will use the notion of invariant 
splitting of the tangent bundle of $\mathcal{X}$ over $\mathcal{A}$,
represented by 
\begin{equation*}
T_x \mathcal{X} = N_x \oplus T_x \mathcal{A} \qquad \quad \forall x \in \calA \ ,
\end{equation*}
with
$
\partial \psi^t(x) T_x \mathcal{A}  = T_{\psi^t(x)} \mathcal{A}
$
and 
$
\partial \psi^t(x) N_x = N_{\psi^t(x)} 
$
for all $x\in \calA$ and $t\geq 0$.

\begin{defi}
\label{def:hyperbolic}
The attractor $\mathcal{A}$ is normally hyperbolic (with respect to the flow $\psi$) if there exists an invariant 
splitting of the tangent bundle at $\mathcal{A}$ which satisfies the following property:
there exist a Riemannian metric $|\cdot |$
and real constants $
\rho_1 \geq 1$, $\rho_2 \geq 1$, $\lambda_1>0$, $\lambda_2<\lambda_1$ such that,
for all $x\in \mathcal{A}$ and $t \geq 0$, 
\begin{subequations}
\label{eq:hyperbolic_splitting}
\begin{align}
\label{eq:hyperbolic_normal}
| \partial \psi^t(x) \delta x | & \leq   \rho_1 e^{-\lambda_1 t} |\delta x| \qquad \forall \delta x \in N_x \\
\label{eq:hyperbolic_tangent}
| \partial \psi^t(x) \delta x | & \geq   \rho_2 e^{-\lambda_2 t} |\delta x| \qquad \forall \delta x \in T_x \mathcal{A} \ .
\end{align}
\end{subequations}
\end{defi}

We can now precisely state and prove the main result of the paper.
\begin{thm}
\label{thm:converse_diff+_detailed}
Consider a dynamical system $\Sigma$ on a Riemannian manifold $\calX$, 
represented by $\dot{x}=f(x)$ where $x\in \calX$ and $f$ is a $C^1$ vector field.
Let  $\mathcal{A} \subset \mathcal{X}$ be a normally hyperbolic compact attractor for $\Sigma$ with basin of attraction 
$\mathcal{B}_{\calA}$.
If $\dim T_x \mathcal{A} = 1$ for all $x\in \mathcal{A}$ then 
$\Sigma$ is strictly differentially positive in $\mathcal{B}_\calA$.
\end{thm}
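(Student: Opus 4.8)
The plan is to avoid working with the dynamics in $\calB_\calA$ directly — far from $\calA$ it is essentially unconstrained, and no a priori estimate will make a fixed cone strictly invariant there — and instead to exhibit a change of coordinates conjugating $(\Sigma,\calB_\calA)$ to a model system for which strict differential positivity is manifest, and then to pull the cone field back. This is the tactic of the fixed-point theorem proved above, with the global linearization of \cite{LanMezicJan2013} replaced by a linearization of the dynamics transverse to the one-dimensional attractor over the whole basin.

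First I would set the stage. Since the splitting $T_x\calX=N_x\oplus T_x\calA$ is normally hyperbolic, I may replace the Riemannian metric by an adapted one so that $\rho_1=\rho_2=1$ in \eqref{eq:hyperbolic_splitting}. As $\calA$ is a connected one-dimensional manifold, $T\calA$ is orientable; I fix an orientation, note that it is spanned by $f$ off the fixed points and that $\partial\psi^t$ preserves it. By the theory of normally hyperbolic invariant manifolds (Hirsch--Pugh--Shub \cite{Hirsch1977} and the attendant strong-stable foliations), $\calB_\calA$ fibres over $\calA$ by the strong-stable manifolds, which yields an invariant projection $\pi\colon\calB_\calA\to\calA$ with $\pi\circ\psi^t=\psi^t_{\calA}\circ\pi$ and an invariant $(n-1)$-plane field $\ker d\pi$ extending $N$. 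The key step is then to linearize the transverse dynamics over this one-dimensional base: to produce a fibre-preserving diffeomorphism of $(\calB_\calA,\psi)$ onto a model $(\calA\times\real^{n-1},\Psi)$ whose flow keeps the (one-dimensional) base flow $\psi_{\calA}$ and is linear in the $\real^{n-1}$-fibre, with fibre-contraction rate at least $\lambda_1$ and base rate at most $\lambda_2<\lambda_1$, as forced by \eqref{eq:hyperbolic_splitting}. For $\calA$ a hyperbolic limit cycle this is the global Floquet/isostable normal form underlying \cite{Mauroy2015}; for $\calA$ a union of connecting arcs and saddles one patches local linearizations at the fixed points, the gap $\lambda_1>\lambda_2$ supplying the non-resonance.

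On the model I would take the cone field $\calK(a,z):=\{(\delta a,\delta z)\,:\,\delta a\text{ in the positive ray of }T_a\calA,\ |\delta z|\le c|\delta a|\}$. Because the base is one-dimensional and orientation-preserving, $\delta a$ stays in the positive ray along the flow; because the fibre is linear and $\lambda_1$-contracting while $\delta a$ decays at rate at most $\lambda_2$, the ratio $|\delta z|/|\delta a|$ decays at least like $e^{-(\lambda_1-\lambda_2)t}$, uniformly in the state. Hence $\calK$ is forward invariant, and choosing $T$ with $c\,e^{-(\lambda_1-\lambda_2)T}\le c/2$ and $\calR(a,z):=\{(\delta a,\delta z)\,:\,|\delta z|\le(c/2)|\delta a|\}\subset\textrm{int}\,\calK(a,z)\cup\{0\}$ gives $\partial\Psi^t(x)\calK(x)\subseteq\calR(\Psi^t(x))$ for all $t\ge T$ — uniform strict differential positivity. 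Pulling $\calK$ and $\calR$ back through the conjugacy (which intertwines the prolonged systems) produces on $\calB_\calA$ a closed, solid, convex, pointed, continuous and piecewise-differentiable cone field with respect to which $\Sigma$ is strictly differentially positive; the linear maps $\Gamma(x_1,x_2)$ of Definition \ref{def:diff_pos} are the pullbacks of the evident equivalences of the model cones. At a fixed point $x^*\in\calA$ this reproduces the positivity of $e^{\partial f(x^*)t}$ relative to a cone about the dominant eigenvector spanning $T_{x^*}\calA$, consistent with Section \ref{sec:diff+_behavior}.

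The hard part will be the transverse linearization in the non-periodic case: gluing local Hartman--Grobman/Sternberg-type linearizations through the saddles on $\calA$ and checking that the normal-hyperbolicity gap (between the $N$-rates $\ge\lambda_1$ and the tangential rate $\le\lambda_2$, cf.\ the Remark relating normal hyperbolicity to $c(x^*)$) rules out the resonances that would make the normal dynamics non-linearizable and generate a base-to-fibre shear incompatible with a uniformly invariant cone, together with ensuring the conjugacy is regular enough for the pulled-back cone field to be admissible (or, failing that, that a merely continuous cone field suffices here). The limit-cycle case is comparatively routine, reducing to the known global normal form and essentially to \cite{Mauroy2015}; the remaining ingredients — the adapted metric, a trapping neighbourhood of $\calA$, the regularity of the strong-stable foliation — are standard.
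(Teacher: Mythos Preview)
Your strategy --- conjugate $(\Sigma,\calB_\calA)$ to a fibre-linear model over $\calA$ and read off a constant cone there --- is genuinely different from the paper's, and the step you yourself flag as ``the hard part'' is a real gap, not a routine technicality. A $C^1$ conjugacy linearizing the transverse dynamics over the whole basin is precisely what the paper says does \emph{not} extend beyond the fixed-point and limit-cycle cases: the Koopman/isostable normal form of \cite{Mauroy2015} supplies it there, but for an attractor made of several fixed points and connecting arcs the local linearizations need not patch (the paper states this explicitly in the paragraph following its proof). The gap $\lambda_1>\lambda_2$ separates normal from tangential rates but does not preclude resonances \emph{among} the normal eigenvalues at a saddle, so Sternberg-type smooth fibre linearization is unavailable in general, and a merely $C^0$ Hartman--Grobman conjugacy cannot pull back a cone field on the tangent bundle. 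There is also a secondary issue on your model: at $z\neq0$ the linearization of a fibre-linear flow $\Psi^t(a,z)=(\phi^t(a),L(t,a)z)$ acquires a shear term $(\partial_aL(t,a)\,z)\,\delta a$ in the $\delta z$-component, so the ratio $|\delta z|/|\delta a|$ does not decay like $e^{-(\lambda_1-\lambda_2)t}$ uniformly in the state without an additional bound on $\partial_aL$ over all of $\calB_\calA$.

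The paper sidesteps all of this by never attempting to control the dynamics far from $\calA$. It builds the cone on $\calA$ directly from the splitting, extends it to a small forward-invariant neighbourhood $\calU$ using only the (standard) invariant stable fibration and a continuity perturbation of the rates \eqref{hyperbolic_normal2}--\eqref{hyperbolic_tangent2}, and then for $x\in\calB_\calA\setminus\calU$ simply \emph{defines} $\calK(x):=\partial\psi^{-\tau(x)}(x_0)\,\calK^{\tau(x)}(x_0)$, where $x_0\in\mathit{bd}\,\calU$ is the first hitting point, $\tau(x)$ the hitting time, and $\{\calK^\rho\}_{\rho\geq 0}$ a nested one-parameter family of cones on $\mathit{bd}\,\calU$ with $\calK^{\rho'}\subset\calK^{\rho}$ for $\rho'>\rho$. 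Forward invariance outside $\calU$ is then tautological --- as $t$ grows $\tau$ shrinks, hence the target cone enlarges --- and strictness comes from the algebraic identity $\calR^\rho=\calK^{\rho+\overline{\varepsilon}}$. No estimate whatsoever on $\partial\psi^t$ outside $\calU$ is used. If you want to rescue your outline, the honest fix is to drop the global linearization, keep your model computation as the argument on a tube around $\calA$ (where it is essentially correct, modulo the shear term), and handle $\calB_\calA\setminus\calU$ by a pullback construction of this kind rather than by a normal form.
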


\begin{proof}

\underline{Strict differential positivity in $\calA$}.
According to \cite[Remark 3]{Hirsch1977} the normal hyperbolicity property with $\dim T_x \mathcal{A} = 1$ implies that one can define an adapted metric $|\cdot|^*$ in $T \mathcal{X}$, $x\in \mathcal{A}$, such that \eqref{eq:hyperbolic_normal}-\eqref{eq:hyperbolic_tangent} are satisfied with $\rho_1=\rho_2=1$, i.e. for all $x \in \mathcal{A}$, there exist real constants $\lambda_1>0$ and $\lambda_2<\lambda_1$ such that \footnote{
It follows from \eqref{eq:hyperbolic_normal} that one can define an adapted metric in $N_x$ for all $x \in \mathcal{A}$, i.e.
$| \partial \psi^t(x) v |^*  \leq  e^{-\lambda_1 t} |v|^*$ for all $v \in N_x$ (see e.g. \cite{Hirsch1969}). This corresponds to \eqref{hyperbolic_normal2}. In addition, since $\dim T_x \mathcal{A} = 1$, there exists a function $\Lambda$ such that $\partial \psi^t(x) v = e^{-\Lambda(t) t} v$ for all $t>0$ and $v \in  T_x \mathcal{A}$, and \eqref{eq:hyperbolic_tangent} implies $\limsup_{t\rightarrow \infty} \Lambda(t) \leq \lambda_2$. For $v \in T_x \mathcal{A}$, we can define the metric
\begin{equation*}
|v|^*=\int_0^T |{\partial}\psi^t(x)v| e^{\lambda_2 t} \,dt
\end{equation*}
where $T$ is large enough. We have
\begin{equation*}
\left. \frac{d|\psi^\tau(x)v|^*}{d\tau}\right|_{\tau=0} = \frac{d}{d\tau} \left[e^{-\lambda_2 \tau} \int_\tau^{T+\tau} |{\partial}\psi^t(x)v| e^{\lambda_2 t} dt \right]_{\tau=0} = -\lambda_2 |v|^* + |v| (e^{(\lambda_2-\Lambda(T))T}-1)
\end{equation*}
where the second term is positive for $T$ large enough. Then we obtain \eqref{hyperbolic_tangent2}.
}
\begin{subequations}
\begin{align}
| \partial \psi^t(x) v |^* & \,\leq\,  e^{-\lambda_1 t} |v|^* 
&&  \forall v \in N_x  \label{hyperbolic_normal2} \\
| \partial \psi^t(x) v |^* & \,\geq\, e^{-\lambda_2 t} |v|^* 
&&  \forall v \in T_x \mathcal{A} \ . \label{hyperbolic_tangent2}
\end{align}
\end{subequations}
Consider the continuous vector field $\xi: \mathcal{A} \to T \mathcal{A}$ such that $|\xi(x)|^*  =  1$ for all $x \in \mathcal{A}$ and
\begin{equation*}
T_x \mathcal{A} = \{\alpha \xi(x) : \alpha \in \mathbb{R} \} \qquad x \in \mathcal{A}\,.
\end{equation*}
Given any $0<\varepsilon \ll 1$, 
at each point $x\in \calA$, we define the cone fields
\begin{subequations}
\label{eq:KandR}
\begin{align}
\mathcal{K}(x) & \, := \,
 \{\alpha \xi(x) + v \,:\, \alpha - |v|^* \geq 0, \alpha \in \mathbb{R}^+, v \in N_x\} 
 \\
\mathcal{R}(x) & \, := \,
\{\alpha \xi(x) + v \,:\, \alpha - |v|^* \geq \alpha\varepsilon , \alpha \in \mathbb{R}^+, v \in N_x\} \ .
\end{align}
\end{subequations}
Directly from the 
definition $\calK(x)$ is solid and closed
at each $x\in \calA$. $\calK(x)$ is a convex cone
since for any $\alpha \xi(x) + v \in \mathcal{K}(x)$ and $\alpha' \xi(x) + v' \in \mathcal{K}(x)$, we have $v+v'\in N_x$ and $\alpha+\alpha' \geq |v|^*+|v'|^* \geq |v+v'|^*$, so that $\mathcal{K}(x)+\mathcal{K}(x) \subseteq \mathcal{K}(x)$. Clearly $\gamma \mathcal{K}(x) \subseteq \mathcal{K}(x)$ for $\gamma>0$. Finally, if $\alpha \xi(x)+v \in \{\mathcal{K}(x),-\mathcal{K}(x)\}$, then $\alpha - |v|^* \geq 0$ and $-\alpha - |v|^* \geq 0$, so that $\alpha=|v|^*=0$. Thus $\mathcal{K}(x) \cap -\mathcal{K}(x) = \{0\}$, which makes $\calK(x)$ pointed.
The same holds for $\calR(x)$.

We show strict differential positivity.
For all $t>0$, $x\in \calA$ and $v \in N_x$, we have $\partial \psi^t(x)v \in N_{\psi^t(x)}$ and $\partial \psi^t(x) (\alpha \xi(x)) = (\alpha |\partial \psi^t(x) \xi(x)|^*)\, \xi(\psi^t(x))  \in T_{\psi^t(x)} \mathcal{A}$
since the splitting is invariant. We obtain
\begin{equation*}
\alpha | \partial \psi^t(x) \xi(x)|^* - |\partial \psi^t(x) v| 
\geq \alpha e^{-\lambda_2 t} - e^{-\lambda_1 t} |v|^* \geq 0
\end{equation*}
for all $t \geq 0$ since $\alpha \geq |v|^*$ and $\lambda_2<\lambda_1$. 
\eqref{eq:diff+} follows.
Furthermore, uniformly on $x$, there exists
$T_1 > 0$ such that $e^{(\lambda_2-\lambda_1)t} \leq (1-\epsilon)$ for all $t \geq T_1$, so that
\begin{equation*}
 (1-\varepsilon) \alpha e^{-\lambda_2 t} - e^{-\lambda_1 t} |v|^* \geq e^{-\lambda_2 t} (1-\epsilon) (\alpha-|v|^*) \geq 0
\end{equation*}
for all $t\geq T_1$, which implies \eqref{eq:sdiff+}.

\underline{Strict differential positivity in a small neighborhood $\calU\supset\calA$}. For any positive constant $c$, denote by $\calU_c$ the (compact) set of points 
$y\in\mathcal{B}_{\mathcal{A}}$ whose distance from $\mathcal{A}$ is less 
than or equal to $c$. 
By the hyperbolicity of $\mathcal{A}$, 
for any given $c >0$ sufficiently small, 
there exists a constant $0 < \rho(c) < c$ such that 
$\psi^t(y) \in \mathcal{U}_{c}$ for all $t \geq 0$ and $y\in \mathcal{U}_{\rho(c)}$.

Since $\mathcal{A}$ is normally hyperbolic and compact, there exists an invariant fibration of its stable manifold, i.e. of the basin of attraction $\mathcal{B}_{\mathcal{A}}$, in the compact set $\calU_c$ (see Theorem 4.1 in \cite{Hirsch1977} and Theorem 2 in \cite{Fenichel1974}).
Denote by $\mathcal{W}(x) \subset \calU_c$ the fiber characterized by a section $x \in \mathcal{A}$ and
denote by $N_y$ the tangent space of the fiber $\mathcal{W}(x)$ at $y \in \mathcal{W}(x)$.
Note that 
the invariance property implies $\partial\psi^t(y)N_y\subseteq N_{\psi^t(y)}$.

Define a continuous extension $\overline{\xi}$ of the vector field 
$\xi$ on $\calU_c$ such that for all $y\in \calU_c\setminus\calA$
\begin{itemize}
\item
$|\overline{\xi}(y)|^* = 1$; 
\item
$\overline{\xi}(y) \in T_y \calX \setminus N_y$;
\item
whenever $\psi^t(y)\in \calU_c$, 
there exists $\rho$ such that
$\partial\psi^t(y)\overline{\xi}(y) = \rho \overline{\xi}(\psi^t(y))$.
\end{itemize}
At each point $y\in \calU_c$ define the cone fields $\calK$ and $\calR$
as in \eqref{eq:KandR}.

We show strict differential positivity on 
$\calU := \bigcup_{t\geq 0} \psi^t(\calU_{\rho(c)}) \supset \calA$,
for $c>0$ sufficiently small.
By continuity of the vector field $\xi$ on the compact set $\calU_c$
and 
by continuity of the prolonged flow \eqref{eq:prolonged}
with respect to initial conditions,
for $c$ sufficiently small, there exists a bound
$0< L(c) \to 0$ as $c \to 0$
such that,
\begin{subequations}
\label{eq:L} 
\begin{align}
|\partial \psi^t(y) v|^*\,  
&\leq \, e^{(-\lambda_1 + L(c)) t}  |v|^* 
&& \forall y \in \calU\,,\forall v \in N_y
\label{eq:L1} \\
|\partial \psi^t(y) \overline{\xi}(y)|^*\,  
&\geq \, e^{(-\lambda_2 - L(c)) t}  
&& \forall y \in \calU \ .
\label{eq:L2} 
\end{align}
\end{subequations}
The bound $L(c)$ arises from
(i) the loss of contraction rate of $\partial \psi^t(y)$ 
in the direction $v \in N_y$
between 
$y \in \mathcal{U} \setminus \mathcal{A}$ 
and $y \in \mathcal{A}$; 
(ii) the variation of the expansion/contraction rate of $\partial \psi^t(y)$ 
in the direction $\overline{\xi}(y)$
between 
$y \in \mathcal{U} \setminus \mathcal{A}$ 
and $y \in \mathcal{A}$.

For any 
$y \in \mathcal{U}$ and any 
$\delta y \in \mathcal{K}(y)$,
we have 
$\delta y = \alpha \overline{\xi}(y) + v $
where $v\in N_y$ and $\alpha - |v|^* \geq 0$. 
Then, by \eqref{eq:L}, for all $t\geq 0$ 
\begin{equation*}
\alpha |\partial \psi^t(y) \overline{\xi}(y)|^* -
|\partial \psi^t(y) v|^* 
\geq
\alpha e^{(-\lambda_2 - L(c)) t}  
- e^{(-\lambda_1 + L(c)) t}  |v|^* \geq 0
\end{equation*}
since $L(c)\to 0$ as $c\to 0$ and therefore
$\lambda_2< \lambda_1$ implies
$\lambda_2 +L(c) < \lambda_1-L(c)$
for $c$ sufficiently small. It follows that \eqref{eq:diff+} holds. 
For strict differential positivity, as before, there exists
$T_2 > 0$ such that, for all $t \geq T_2$,
\begin{equation}
\label{eq:sdiff+_calU}
(1-\varepsilon)\alpha e^{(-\lambda_2 - L(c)) t}  - e^{(-\lambda_1 + L(c)) t}  |v|^* \geq  e^{-\lambda_2 t} (1-\epsilon) (\alpha-|v|^*) \geq 0.
\end{equation}
Note that $T_2 \rightarrow T_1$ as $c\rightarrow 0$.

\underline{Strict differential positivity in $\calB_{\calA} \setminus \calU$}.
Consider the boundary $\mathit{bd}\calU$ of $\calU$. 
Following \eqref{eq:KandR}, for each $x\in \mathit{bd}\calU$,
define the $\rho$-parametrized cone fields
\begin{subequations}
\begin{align*}
\mathcal{K}^\rho(x) 
&= \{\alpha \overline{\xi}(x) + v \,:\, \alpha - e^\rho |v|^* \geq 0, 
\alpha \in \mathbb{R}^+, v \in N_x\} \\
\mathcal{R}^\rho(x) 
&= \{\alpha \overline{\xi}(x) + v \,:\, \alpha - e^\rho |v|^* \geq \varepsilon \alpha, 
\alpha \in \mathbb{R}^+, v \in N_x\} \ .
\end{align*}
\end{subequations}
Clearly,
$\mathcal{K}^{\rho'}(x)  \subset  \mathcal{K}^{\rho}(x)$ 
for any  $\rho'>\rho$,
since $\alpha - e^{\rho'} |v|^* \geq 0$ implies $\alpha - e^\rho |v|^* > 0$.
Furthermore, 
$\calR^\rho = \calK^{\rho+\overline{\varepsilon}}$
for $\overline{\varepsilon} := -\log (1-\varepsilon)$.
In fact, 
$(1-\varepsilon)\alpha - e^\rho |v|^* \geq 0$ 
if and only if
$\alpha - \frac{e^\rho}{(1-\varepsilon)} |v|^* \geq 0$,
and
$\frac{e^\rho}{(1-\varepsilon)} = e^{\rho+\overline{\varepsilon}}$.

We are ready to define the cone fields 
on $\mathcal{B}_{\mathcal{A}}\setminus \mathcal{U}$.
For $x \in \mathcal{B}_{\mathcal{A}} \setminus \mathcal{U}$, 
take
\begin{equation*}
\tau(x) :=  \min \{t \in \mathbb{R}_{\geq 0}\,:\, \exists x_0\in \calU, \psi(-t,x_0)=x \} 
\ 
\end{equation*} 
and let $x_0(x) \in \mathit{bd}\calU$ be the (unique) initial condition 
that satisfies the identity
$x=\psi(-\tau(x),x_0(x))$. Define
\begin{subequations}
\label{eq:cones_Ba_minus_U}
\begin{align}
\mathcal{K}(x) 
&:= \partial \psi(-\tau(x),x_0(x)) \mathcal{K}^{\tau(x)}(x_0(x)) 
&&x \in \mathcal{B}_{\mathcal{A}}\setminus \mathcal{U} \\
\mathcal{R}(x) 
&:= \partial \psi(-\tau(x),x_0(x)) \mathcal{R}^{\tau(x)}(x_0(x)) 
&&x \in \mathcal{B}_{\mathcal{A}}\setminus \mathcal{U} \ .
\end{align}
\end{subequations}
$\calK$ and $\calR$ are well defined since
the class of cones is closed under the action of linear maps. 
$\calK$ and $\calR$ are continuous by construction.

We prove differential positivity.
Using \eqref{eq:cones_Ba_minus_U}, for all $x\in \mathcal{B}_{\mathcal{A}}\setminus \mathcal{U}$ 
and  $t \leq \tau(x)$, we have
\begin{equation}
\label{eq:Ba_inclusion1}
\begin{split}
\partial \psi^t(x) \mathcal{K}(x) & = \partial \psi^{t-\tau(x)}(x_0(x)) \mathcal{K}^{\tau(x)}(x_0(x)) \\
& \subseteq \partial \psi^{t-\tau(x)}(x_0(x)) \mathcal{K}^{\tau(x)-t}(x_0(x)) \\
& = \partial \psi^{-\tau(\psi^t(x))}(x_0(\psi^t(x))) \mathcal{K}^{\tau(\psi^t(x))}(x_0(\psi^t(x))) \\
& =  \mathcal{K}(\psi^t(x))
\end{split}
\end{equation}
where we have used the identities
$t-\tau(x)=-\tau(\psi^t(x))$ and  $x_0(x)=x_0(\psi^t(x))$. 
The inclusion follows from the property
$\mathcal{K}^{\rho'}(x)  \subset  \mathcal{K}^{\rho}(x)$ 
for any  $\rho'>\rho$.  
For all $x\in \mathcal{B}_{\mathcal{A}}\setminus \mathcal{U}$ 
and  $t > \tau(x)$, 
we have 
\begin{equation}
\label{eq:Ba_inclusion2}
\begin{split}
\partial \psi^t(x) \mathcal{K}(x) 
& = \partial \psi^{t-\tau(x)}(x_0(x)) \mathcal{K}^{\tau(x)}(x_0(x)) \\
& \subseteq \partial \psi^{t-\tau(x)}(x_0(x)) \mathcal{K}^0(x_0(x)) \\
& \subseteq \mathcal{K}(\psi^{t-\tau(x)}(x_0(x))) \\
& = \mathcal{K}(\psi^t(x)) \ .
\end{split}
\end{equation}
The second inclusion follows from 
differential positivity in $\calU$. Differential positivity is thus established.

For strict differential positivity, by repeating the argument above, 
note that  \eqref{eq:Ba_inclusion1} and \eqref{eq:Ba_inclusion2}
hold when $\calK$ is replaced by $\calR$. Thus, we have only to
show that there exists a uniform time $T$ for which
$\partial \psi^T(x) \calK(x) \subseteq \calR(x)$.
For instance, 
consider the case 
$\overline{\varepsilon}\leq \tau(x)$,
where $\overline{\varepsilon} := -\log (1-\varepsilon)$.
Exploiting the identity
$ \mathcal{K}^{\rho}(x_0(x)) = \mathcal{R}^{\rho-\overline{\varepsilon}}(x_0(x))$, we have
\begin{equation}
\label{eq:strict_contract}
\begin{split}
\partial \psi^{\overline{\varepsilon}(x)} \mathcal{K}(x) 
& = \partial \psi^{\overline{\varepsilon}-\tau(x)}(x_0(x)) \mathcal{K}^{\tau(x)}(x_0(x)) \\
& = \partial \psi^{\overline{\varepsilon}-\tau(x)}(x_0(x)) \mathcal{R}^{\tau(x)-\overline{\varepsilon}}(x_0(x)) \\
& = \partial \psi^{-\tau(\psi^{\overline{\varepsilon}}(x))}(x_0(\psi^{\varepsilon}(x))) \mathcal{R}^{\tau(\psi^{\overline{\varepsilon}}(x))}(x_0(\psi^{\overline{\varepsilon}}(x))) \\
& =  \mathcal{R}(\psi^{\overline{\varepsilon}}(x)) \ .
\end{split}
\end{equation}
For $\overline{\varepsilon} > \tau(x)$,
$\psi^{t}(x) \in \mathrm{bd}\calU$ for some $t \in [0,\overline{\varepsilon}]$.
Thus strict differential positivity follows from \eqref{eq:strict_contract} and \eqref{eq:sdiff+_calU}, for a uniform $T=T_2+\overline{\varepsilon}$.
\end{proof}
\begin{remm}
The contracting cone field $\mathcal{K}(x)$ proposed in the proof is not unique. 
Several definitions can be provided, starting from the 
parameterization $\mathcal{K}^{\rho}(x)$ whose constant $e^\rho$ 
can be replaced by any function $k(\rho,x)$, $x\in \mathit{bd}\calU$, such that 
$k(0,x) = 1$ and $k(\cdot,x)$ is strictly increasing for each $x$. 
Looking at Section \ref{sec:diff+}, this implies that, in principle, different Perron-Frobenius vector fields
could arise from different cone fields. In general, 
the Perron-Frobenius vector field is uniquely defined only at $x\in\calA$.

The contracting cone field $\mathcal{K}(x)$ is continuous and piecewise differentiable on $\mathcal{B}_{\mathcal{A}}$.
In fact, it is not $C^1$ on the boundary of $\calU$. However, we suspect that one can obtain a contracting cone field that 
is $C^1$ everywhere in $\mathcal{B}_{\mathcal{A}}$ by choosing a properly parametrized cone field $\mathcal{K}^{\rho}(x)$.
\end{remm}

Theorem \ref{thm:converse_diff+_detailed} extends and completes the work initiated in 
\cite{Mauroy2015} on converse theorems for
normally hyperbolic attractors given by a unique fixed point or by a limit cycle. 
The argument for these converse results is based on the properties
of the Koopman operator \cite{Budisic2012}. 
In particular, the proof derives and exploits an interesting connection between 
the Koopman eigenfunctions related to the attractor \cite{Mauroy2013a,Mauroy2013} 
and the existence of an invariant (and contracting) cone field.
The proof is constructive, leading to 
a numerical algorithm for the construction
of the cone field \cite[Section 5]{Mauroy2015}.
However, the approach in \cite{Mauroy2015} does not extend to
the general one-dimensional normally hyperbolic attractors considered in Theorem \ref{thm:converse_diff+_detailed}.
For attractors containing several fixed points,
a cone field from Koopman eigenfunctions is well-defined in
the basin of attraction of each fixed point, but the patching 
of these cone fields in the whole basin
of attraction of the one-dimensional attractor is not necessarily well defined.

\section{Why normal hyperbolicity?}
\label{sec:example}
The relevance of the normal hyperbolicity property for Theorem \ref{thm:converse_diff+_detailed}
is readily illustrated by a comparison of the attractors in Figure \ref{fig:example}.
The one-dimensional attractor $\calA_\ell$ on the left of the figure is given
by two stable fixed points, a saddle point $x^*$, and the heteroclinic
orbits connecting the fixed points to the saddle - the unstable
manifold of the saddle.
The one-dimensional attractor $\calA_r$ on the right of the figure  is
characterized by a stable fixed point, a saddle point $x^*$, the
heteroclinic orbit connecting the fixed point to the saddle,
and the homoclinic orbit connecting
unstable and stable manifolds of the saddle. 
We assume that each fixed point is hyperbolic.
For simplicity we take $\calX := \real^2$ and 
$\calB_\calA = \calX$. 
\begin{figure}[htbp]
\centering
\includegraphics[width=0.66\columnwidth]{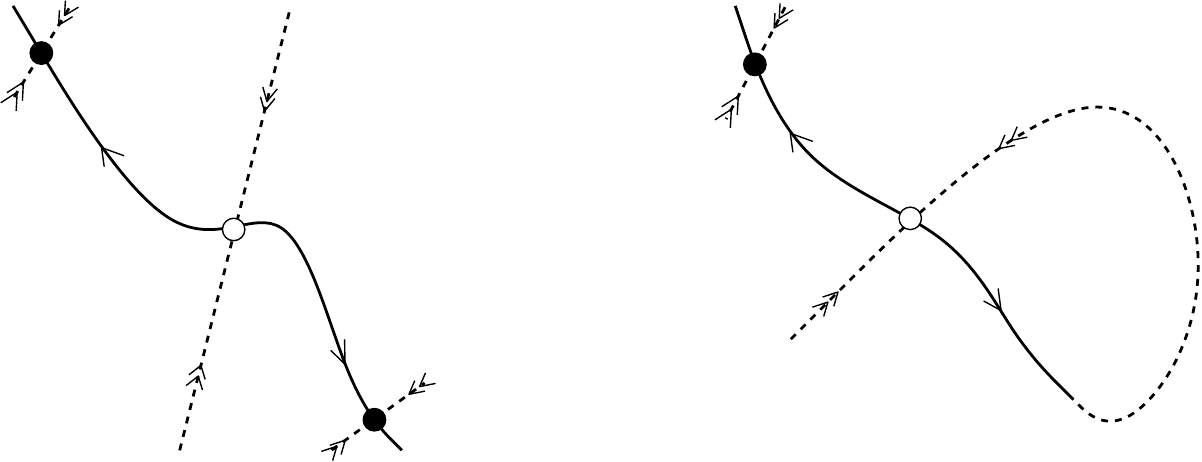}
\caption{One-dimensional attractors defined
by a set of fixed points and connecting arcs.}
\label{fig:example}
\end{figure}

$\calA_\ell$ is normally hyperbolic. The 
transversal convergence towardss the attractor
is dominated by the motion within the attractor. 
This attractor is compatible with differential positivity. 
Indeed, by Theorem \ref{thm:converse_diff+},
the attractor $\calA_\ell$ makes
the system strictly differentially positive in $\calX$.

$\calA_r$ is not normally hyperbolic
since unstable and stable manifolds of the
saddle merge. Thus, there is no invariant splitting of
the tangent bundle at $\calA_r$. In fact,
the system cannot be strictly differentially positive. We show this
by contradiction. Suppose that the system is
strictly differentially positive with respect to
some cone field $\calK$.
The Perron-Frobenius vector field $\mathbf{w}$
is a continuous vector field that satisfies 
$\mathbf{w}(x) \in T_x \calA$ at each $x \in \calA$. 
However, for any point $x$ on the homoclinic orbit we have that
$\lim\nolimits\limits_{t\to\infty}\mathbf{w}(\psi^t(x)) 
\neq \lim\nolimits\limits_{t\to \infty}\mathbf{w}(\psi^{-t}(x))$
despite 
$\lim\nolimits\limits_{t\to\infty} \psi^t(x) = 
\lim\nolimits\limits_{t\to\infty} \psi^{-t}(x) = x^*$.
This contradicts the continuity of $\mathbf{w}$.

A similar argument can be developed 
from the invariance of the cone field $\calK$.
By invariance, $\calK(x^*)$ must contain
both the tangent vectors 
$\delta x_s$ and $\delta x_u$
respectively tangent to the stable and unstable manifolds
of the saddle. 
For instance, (i) $\delta x_{u}$ is the dominant eigenvector
of the linearized flow on $x^*$ therefore $\delta x_u\in \calK(x^*)$;
(ii) by continuity, 
$\lim\nolimits\limits_{t\to\infty} \calK(\psi^{-t}(x)) = \calK(x^*)$
for any $x$ on the homoclinic orbit. Thus,
the tangent vector $\delta x$ to the homoclinic orbit at $x$ belongs to $\calK(x)$
and, by differential positivity
$\lim\nolimits\limits_{t\to\infty} \partial\psi^t(x)\calK(x) \subseteq \calK(x^*)$.
Therefore, $\delta x_s \in \calK(x^*)$.
However, considering the eigenvalues $\lambda_1 > \lambda_2$ of the linearization at the saddle $x^*$,
we have $\partial\psi^t(x^*) \delta x_s = e^{-\lambda_1 t} \delta x_s$
and
$\partial\psi^t(x^*) \delta x_u = e^{-\lambda_2 t} \delta x_u$
which contradicts the projective contraction property 
\eqref{eq:Hilbert_contraction}, since
$\lim\nolimits\limits_{t\to\infty}
d_{\calK(x^*)}(\partial\psi^t(x^*) \delta x_s,\partial\psi^t(x^*) \delta x_u)
= d_{\calK(x^*)}(\delta x_s,\delta x_u) > 0$.

\section{Conclusions}
\label{sec:conclusions}

Differential analysis, or the study of the nonlinear map through the properties
of its linearization, is a classical topic of dynamical systems theory.
We have shown that behaviors that asymptotically converge to a 
one-dimensional normally hyperbolic attractor can be characterized differentially:
their linearizations are \emph{positive}, they contract a cone field. 
This characterization can be thought as analog to the differential characterization of
behaviors that asymptotically converge to a hyperbolic fixed point; those behaviors
that contract a ``ball field'', that is, a Riemannian metric.
The result once more stresses the importance of hyperbolicity in differential analysis. 
It also emphasizes a fruitful connection between the property of positivity in
linear analysis and a geometric characterization of dynamical systems
with one-dimensional asymptotic behaviors.

\end{document}